\newtheorem{theorem}{Theorem}[section]
\newtheorem{lemma}[theorem]{Lemma}
\newcommand{\RR}{\mathbb{R}}
\newcommand{\NN}{\mathbb{N}}
\newcommand{\ZZ}{\mathbb{Z}}
\newcommand{\Ex}{\mathbb{E}}
\newcommand{\volsel}{\textsc{Volume Selection}}
\newcommand{\HSS}{\textsc{VolSel}}
\newcommand{\bbox}{\textsc{box}}
\newcommand{\vol}{\textsc{vol}}
\newcommand{\diff}{\textsc{diff}}
\newcommand{\unio}{\mathcal{U}}
\newcommand{\polylog}{\textup{polylog}}
\newcommand{\optcomp}{\Phi_{\rm comp}}
\newcommand{\optfree}{\Phi_{\rm free}}
\newcommand{\dpcomp}{\Psi_{\rm comp}}
\newcommand{\D}{\mathcal{D}}
\newcommand{\eps}{\varepsilon}
\newcommand{\bx}{\bar{x}}
\newcommand{\by}{\bar{y}}
\newcommand{\bl}{{\bar{\ell}}}
\newcommand{\ignore}[1]{}
\def\DEF#1{\textbf{\emph{#1}}}
\title{Maximum Volume Subset Selection for Anchored Boxes}
\author{
Karl Bringmann\thanks{Max Planck Institute for Informatics, Saarland Informatics Campus, Saarbr\"ucken, Germany.} \and
Sergio Cabello\thanks{Department of Mathematics, IMFM, and Department of Mathematics, FMF, University of Ljubljana, Slovenia. Supported by the Slovenian Research Agency, program P1-0297 and project L7-5459.} \and
Michael T.M. Emmerich\thanks{Leiden Institute of Advanced Computer Science (LIACS), Leiden University, the Netherlands.}
}
\date{}
\begin{document}
\setcounter{page}{0}
\maketitle
\thispagestyle{empty}

\begin{abstract}
	Let $B$ be a set of $n$ axis-parallel boxes in $\RR^d$ 
    such that each box has a corner at the origin and the other corner
    in the positive quadrant of $\RR^d$, and let $k$ be a positive integer.  
    We study the problem of selecting $k$ boxes in $B$ that maximize
    the volume of the union of the selected boxes.
    This research is motivated by applications in skyline queries for databases and in multicriteria optimization, where the problem is known as the \emph{hypervolume subset 
    selection problem}.
    It is known that the problem can be solved in polynomial time in the plane, 
    while the best known running time in any dimension $d \ge 3$ is $\Omega\big(\binom{n}{k}\big)$.
    We show that:
    \begin{itemize}
    \item The problem is NP-hard already in 3 dimensions. 
    \item In 3 dimensions, we break the bound $\Omega\big(\binom{n}{k}\big)$, by providing an $n^{O(\sqrt{k})}$ algorithm.
    \item For any constant dimension $d$, we present an efficient polynomial-time approximation scheme. 
    \end{itemize}
\end{abstract}

\newpage

\section{Introduction}
An \emph{anchored box} is an orthogonal range of the form $\bbox(p) := [0,p_1]\times \ldots \times [0,p_d] \subset \RR_{\ge 0}^d$, spanned by the point $p \in \RR_{>0}^d$. 
This paper is concerned with the problem \emph{\volsel}: Given a set $P$ of $n$ points in $\RR_{>0}^d$, select $k$ points in $P$ maximizing the volume of the union of their anchored boxes. That is, we want to compute
$$ \HSS(P,k) := \max_{S\subseteq P, \, |S| = k} \vol \Big(\bigcup_{p \in S} \bbox(p) \Big),$$
as well as a set $S^* \subseteq P$ of size $k$ realizing this value.
Here, $\vol$ denotes the usual volume.

\paragraph*{Motivation}
This geometric problem is of key importance in the context of multicriteria optimization and decision analysis, where it is known as the \emph{hypervolume subset selection problem (HSSP)} \cite{ABB+09, ABZ12, Bader09,KFP+14,bringmann2014generic,BFK14}. In this context, the points in $P$ correspond to solutions of an optimization problem with $d$ objectives, and the goal is to find a small subset of $P$ that ``represents'' the set $P$ well. The quality of a representative subset $S \subseteq P$ is measured by the volume of the union of the anchored boxes spanned by points in $S$; this is also known as the \emph{hypervolume indicator}~\cite{zitzler2003performance}. Note that with this quality indicator, finding the optimal size-$k$ representation is equivalent to our problem $\HSS(P,k)$.
In applications, such bounded-size representations are required in archivers for non-dominated sets \cite{KCF03} and for multicriteria optimization algorithms and heuristics \cite{ABZ12,bringmann2010efficient,BNE07}.\footnote{We remark that in these applications the anchor point is often not the origin, however, by a simple translation we can move our anchor point from $(0,\ldots,0)$ to any other point in $\RR^d$.}
Besides, the problem has recently received attention in the context of skyline operators in databases \cite{EDY15}.

In 2 dimensions, the problem can be solved in polynomial time~\cite{ABB+09,BFK14, KFP+14}, which is used in applications such as analyzing benchmark functions~\cite{ABB+09} and efficient postprocessing of multiobjective algorithms~\cite{bringmann2014generic}. A natural question is whether efficient algorithms also exist in dimension $d \ge 3$, and thus whether these applications can be pushed beyond two objectives.

\smallskip
In this paper, we answer this question negatively, by proving that \volsel\ is NP-hard already in 3 dimensions. 
We then consider the question whether the previous $\Omega(\binom{n}{k})$ bound can be improved, which we answer affirmatively in 3 dimensions.
Finally, for any constant dimension, we improve the best-known $(1-1/e)$-approximation to an efficient polynomial-time approximation scheme (EPTAS).
See Section~\ref{sec:ourresults} for details.

\subsection{Further Related Work}

\paragraph*{Klee's Measure Problem}
To compute the volume of the union of $n$ (not necessarily anchored) axis-aligned boxes in $\RR^d$ is known as Klee's measure problem. 
The fastest known algorithm takes time\footnote{In $O$-notation, we always assume $d$ to be a constant, and $\log(x)$ is to be understood as $\max\{1, \log(x)\}$.} $O(n^{d/2})$, which can be improved to $O(n^{d/3} \polylog(n))$ if all boxes are cubes~\cite{chan2013klee}. By a simple reduction~\cite{bringmann2013bringing}, the same running time as on cubes can be obtained on anchored boxes, which can be improved to $O(n \log n)$ for $d \le 3$~\cite{beume2009complexity}.
These results are relevant to this paper because Klee's measure problem on anchored boxes (spanned by the points in~$P$) is a special case of \volsel\ (by calling $\HSS(P,|P|)$). 

Chan~\cite{CHAN2010243} gave a reduction from $k$-Clique to Klee's measure problem in $2k$ dimensions. This proves NP-hardness of Klee's measure problem when $d$ is part of the input (and thus $d$ can be as large as $n$). Moreover, since $k$-Clique has no $f(k) \cdot n^{o(k)}$-time algorithm under the Exponential Time Hypothesis~\cite{chen2004linear}, Klee's measure problem has no $f(d) \cdot n^{o(d)}$-time algorithm under the same assumption. The same hardness results also hold for Klee's measure problem on anchored boxes, by a reduction in~\cite{bringmann2013bringing} (NP-hardness was first proven in~\cite{bringmann2012approximating}).

Finally, we mention that Klee's measure problem has a very efficient randomized $(1 \pm \eps)$-approximation algorithm in time $O(n \log(1/\delta) /\eps^2)$ with error probability $\delta$~\cite{BRINGMANN2010601}.

\paragraph*{Known Results for Volume Selection}
As mentioned above, 2-dimensional \volsel\ can be solved in polynomial time; the initial $O(k n^2)$ algorithm~\cite{ABB+09} was later improved to $O((n-k)k + n \log n)$~\cite{BFK14,KFP+14}.
In higher dimensions, by enumerating all size-$k$ subsets and solving an instance of Klee's measure problem on anchored boxes for each one, there is an $O\big( \binom{n}{k} k^{d/3} \polylog(k)\big)$ algorithm. For small $n-k$, this can be improved to $O(n^{d/2} \log n + n^{n-k})$~\cite{bringmann2010efficient}.
\volsel\ is NP-hard when $d$ is part of the input, since the same holds already for Klee's measure problem on anchored boxes. However, this does not explain the exponential dependence on~$k$ for constant~$d$.

Since the volume of the union of boxes is a submodular function (see, e.g.,~\cite{ulrich2012bounding}), the greedy algorithm for submodular function maximization~\cite{nemhauser1978analysis} yields a $(1-1/e)$-approximation of $\HSS(P,k)$. This algorithm solves $O(n k)$ instances of Klee's measure problem on at most $k$ anchored boxes, and thus runs in time $O(n k^{d/3+1} \polylog(k))$. Using~\cite{BRINGMANN2010601}, this running time improves to $O(n k^2 \log(1/\delta) / \eps^2)$, at the cost of decreasing the approximation ratio to $1-1/e-\eps$ and introducing an error probability $\delta$.
See~\cite{GFP16} for related results in $3$ dimensions.

A problem closely related to \volsel\ is \emph{\textsc{Convex Hull Subset Selection}}: Given $n$ points in $\RR^d$, select $k$ points that maximize the volume of their convex hull. For this problem, NP-hardness was recently announced in the case $d=3$~\cite{RBB+16}.

\subsection{Our Results}
\label{sec:ourresults}

In this paper we push forward the understanding of \volsel. 
We prove that \volsel\ is NP-hard already for $d=3$ (Section~\ref{sec:hard}). Previously, NP-hardness was only known when $d$ is part of the input and thus can be as large as $n$. 
Moreover, this establishes \volsel\ as another example for problems that can be solved in polynomial time in the plane but are NP-hard in three or more dimensions (see also \cite{barahona1982computational,MitchellS04}).

In the remainder, we focus on the regime where $d \ge 3$ is a constant and $k \ll n$. 
All known algorithms (explicitly or implicitly) enumerate all size-$k$ subsets of the input set $P$ and thus take time $\Omega\big( \binom{n}{k} \big) = n^{\Omega(k)}$. In 3 dimensions, we break this time bound by providing an $n^{O(\sqrt{k})}$ algorithm (Section~\ref{sec:exact}). To this end, we project the 3-dimensional \volsel\ to a 2-dimensional problem and then use planar separator techniques.

Finally, in Section~\ref{sec:ptas} we design an EPTAS for \volsel. More precisely, we present a $(1-\eps)$-approximation algorithm running in time $O(n \cdot \eps^{-d} (\log n + k + 2^{O(\eps^{-2} \log 1/\eps)^d}))$, for any constant dimension $d$. Note that the ``combinatorial explosion'' is restricted to $d$ and $\eps$; for any constant $d,\eps$ the algorithm runs in time $O(n (k+\log n))$. This improves the previously best-known $(1-1/e)$-approximation, even in terms of running time.

\section{Preliminaries}
\label{sec:convention}

All boxes considered in the paper are axis-parallel and anchored at the origin. For points $p = (p_1,\ldots,p_d), \, q = (q_1,\ldots,q_d) \in \RR^d$, we say that $p$ \emph{dominates} $q$ if $p_i \ge q_i$ for all $1 \le i \le d$. For $p = (p_1,\ldots,p_d) \in \RR_{>0}^d$, we let 
$ \bbox(p) := [0,p_1] \times \ldots \times [0,p_d]$.
Note that $\bbox(p)$ is the set of all points $q \in \RR_{\ge 0}^d$ that are dominated by $p$.
A \emph{point set} $P$ is a set of points in $\RR_{> 0}^d$. We denote the union $\bigcup_{p \in P} \bbox(p)$ by $\unio(P)$. The usual Euclidean volume is denoted by $\vol$. With this notation, we set
\[ \mu(P) := \vol(\unio(P)) = \vol \Big( \bigcup_{p \in P} \bbox(p) \Big) = \vol \Big( \bigcup_{p \in P} [0,p_1] \times \ldots \times [0,p_d] \Big) . \]
We study \volsel: Given a point set $P$ of size $n$ and $0 \le k \le n$, compute 
\[ \HSS(P,k) := \max_{S \subseteq P, \, |S| = k} \mu(S). \]
Note that we can relax the requirement $|S| = k$ to $|S| \le k$ without changing this value.

\section{Hardness in 3 Dimensions}
\label{sec:hard}

We consider the following decision variant of 3-dimensional \volsel.

\begin{quote}
	{\sc 3d Volume Selection}\\
	\emph{Input:} A triple $(P,k,V)$, where $P$ is a set of points in $\RR_{>0}^3$,
		$k$ is a positive integer and $V$ is a positive real value.\\
	\emph{Question:} Is there a subset $Q\subseteq P$ of $k$ points such that
		$\mu(Q)\ge V$?
\end{quote}

We are going to show that the problem is NP-complete.
First, we show that an intermediate problem about selecting a large
independent set in a given induced subgraph of the triangular grid is NP-hard.
The reduction for this problem is from independent set in planar graphs 
of maximum degree $3$. Then we argue that this problem can be embedded
using boxes whose points lie in two parallel planes. One plane is used
to define the triangular-grid-like structure and 
the other is used to encode the subset of vertices that describe the
induced subgraph of the grid.

\subsection{Triangular Grid}
Let $\Gamma$ be the infinite graph with vertex set and edge set (see Figure~\ref{fig:triangulargrid1})
\begin{align*}
	V(\Gamma) ~&=~ \big\{ ( i+ j\cdot 1/2, j\cdot\sqrt{3}/2) \mid i,j\in \NN \big\}, \\
	E(\Gamma) ~&=~ \left \{ ab\mid a,b\in V(\Gamma),~ 
		\text{the Euclidean distance between $a$ and $b$ is exactly $1$} \right\}.
\end{align*}
First we show that the following intermediate problem, which is closely related to 
independent set, is NP-hard.

\begin{figure}
	\centering
	\includegraphics[page=1,scale=.8]{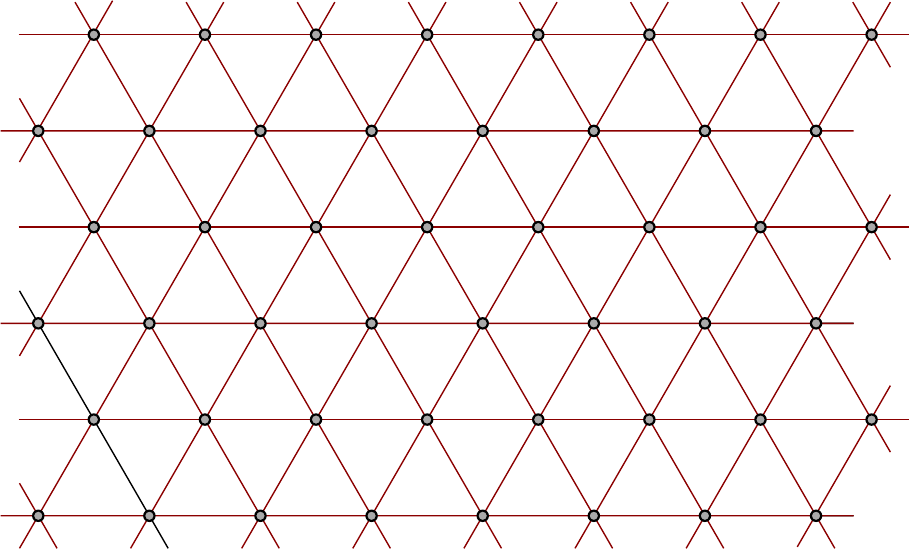}
	\caption{Triangular grid $\Gamma$.}
	\label{fig:triangulargrid1}
\end{figure}

\begin{quote}
	{\sc Independent Set on Induced Triangular Grid}\\
	\emph{Input:} A pair $(A,\ell)$, where $A$ is a subset of $V(\Gamma)$
		and $\ell$ is a positive integer.\\
	\emph{Question:} Is there a subset $B\subseteq A$ of size $\ell$ such that
		no two vertices in $B$ are connected by an edge of $E(\Gamma)$?
\end{quote}

\begin{lemma}
\label{lem:Independent Set Induced Triangular Grid}
	{\sc Independent Set on Induced Triangular Grid} is NP-complete.
\end{lemma}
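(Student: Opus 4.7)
The plan is to reduce from \textsc{Independent Set} on planar graphs of maximum degree $3$, which is classically NP-hard. Given such a graph $G=(V,E)$ together with a target size $s$, I would produce a pair $(A,\ell)$ with $A\subseteq V(\Gamma)$ such that $G$ has an independent set of size $s$ if and only if $\Gamma[A]$ has an independent set of size $\ell$. Membership in NP is immediate: the certificate is $B$.

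\textbf{Embedding.} The first step is to embed $G$ into the triangular grid. Using a standard planar grid-drawing technique (Valiant-style rectilinear embedding) scaled up by a large constant, I place each $v\in V$ at a grid point $\phi(v)\in V(\Gamma)$ and realize each edge $e=uv$ as a simple path $\pi_e$ in $\Gamma$ from $\phi(u)$ to $\phi(v)$. The routing uses only two of the three edge-direction families of $\Gamma$, with $120^\circ$ bends in place of right-angle corners; $60^\circ$ turns are forbidden because the configuration $a$-$b$-$c$ with a $60^\circ$ turn has $a$ and $c$ at Euclidean distance $1$ and therefore creates a triangular shortcut. Scaled-up spacing and empty ``buffer'' rows and columns between parallel routes guarantee the following non-interference condition: any two grid points used by the embedding that are not consecutive on the same $\pi_e$ have Euclidean distance strictly greater than $1$ in $\Gamma$. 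Since $G$ has maximum degree $3$ while each grid vertex has $6$ neighbours in $\Gamma$, there is enough room at each $\phi(v)$ to start its (at most three) incident paths in distinct safe directions.

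\textbf{Parity and independent-set analysis.} A simple two-vertex detour inserted along any edge-path lets me enforce that each $\pi_e$ contains an even number $2k_e \ge 2$ of internal grid vertices. Let $A := \{\phi(v):v\in V\}\cup\bigcup_{e\in E}(\text{internal vertices of }\pi_e)$. By non-interference, $\Gamma[A]$ is isomorphic to $G$ with each edge $e$ subdivided into a path of $2k_e+1$ edges. A short case analysis on a single such subdivided path $\phi(u)-w_1-\cdots-w_{2k_e}-\phi(v)$ shows that the maximum number of internal $w_i$ that can belong to an independent set $S$ is $k_e$ if at most one of $\phi(u),\phi(v)$ lies in $S$, and drops to $k_e-1$ when both do. Summing over all edges and writing $T = S \cap \phi(V)$ and $E_T = \{e \in E : \text{both endpoints of } e \text{ lie in } T\}$, one obtains
\[
\alpha(\Gamma[A]) \;=\; \max_{T\subseteq V}\bigl(|T|-|E_T|\bigr) \;+\; \sum_{e\in E} k_e \;=\; \alpha(G) \;+\; \sum_{e\in E} k_e,
\]
where $\alpha(\cdot)$ denotes the size of a maximum independent set; the last equality follows from $\alpha(G[T]) \ge |T|-|E(G[T])|$ (proved by iterated deletion of an endpoint of any remaining edge) combined with the trivial $\alpha(G[T]) \le \alpha(G)$, the maximum being attained when $T$ is itself a maximum independent set of $G$. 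Setting $\ell := s + \sum_e k_e$ completes the reduction.

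\textbf{Main obstacle.} The delicate step is the embedding. Standard rectilinear-grid embeddings of planar degree-$3$ graphs are designed for $90^\circ$ turns, whereas $\Gamma$ only offers $60^\circ$ and $120^\circ$ turns, and the $60^\circ$ ones must be excluded everywhere. One must therefore either emulate a rectilinear grid inside $\Gamma$ using a single pair of direction-families and replace each $90^\circ$ corner by a short $120^\circ$ bend, or route directly in $\Gamma$ while scaling generously so that antidiagonal neighbours of each used grid vertex remain unoccupied. Once the embedding is in place, the parity adjustment and the independent-set counting above are routine.
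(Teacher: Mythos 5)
Your proposal is correct and follows essentially the same strategy as the paper: reduce from \textsc{Independent Set} on planar degree-$3$ graphs, embed the graph as an induced subdivision in $\Gamma$ via a scaled-up grid drawing while avoiding $60^\circ$ bends, adjust parities so each edge-path has an even number of internal vertices, and use the identity $\alpha(\Gamma[A]) = \alpha(G) + \sum_e k_e$. The only stylistic difference is that you establish the independence-number identity by a direct path-by-path case analysis plus the $\max_T(|T|-|E_T|)$ argument, whereas the paper obtains it by iterating the observation that replacing an edge by a $3$-edge path raises $\alpha$ by exactly one; both arguments are valid and equivalent.
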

\begin{proof}
	It is obvious that the problem is in NP.

	Garey and Johnson~\cite{gj77} show that the problem {\sc Vertex Cover} is NP-complete 
	for planar graphs of degree at most $3$. 
	Since a subset $U \subseteq V(G)$ is a vertex cover of graph $G$ if and only if $V(G)\setminus U$
	is an independent set of $G$, it follows that the problem {\sc Independent Set} is NP-complete 
	for planar graphs of degree at most $3$. For the rest of the proof,
	let $G$ be a planar graph of degree at most $3$.
	
	Let us define a \DEF{$\Gamma$-representation} of $G$ to be a pair $(H,\varphi)$, 
	where $H\subset \Gamma$ and $\varphi$ is a mapping, with the following
	properties:
	\begin{itemize}
		\item Each vertex $u$ of $G$ is mapped to a distinct vertex $\varphi(u)$ of $H$.
		\item Each edge $uv$ of $G$ is mapped to a simple path $\varphi(uv)$ contained in $H$
			and connecting $\varphi(u)$ to $\varphi(v)$.
		\item For each two distinct edges $uv$ and $u'v'$ of $G$, 
			the paths $\varphi(uv)$ and $\varphi(u'v')$ are disjoint except at the common endpoints
			$\{ \varphi(u),\varphi(v)\}\cap \{ \varphi(u'),\varphi(v')\}$.
		\item The graph $H$ is precisely
			the union of $\varphi(u)$ and $\varphi(uv)$ over all vertices $u$
			and edges $uv$ of~$G$.
	\end{itemize}
	Note that if $(H,\varphi)$ is a $\Gamma$-representation of $G$ then
	$H$ is a subdivision of $G$. The map $\phi$ identifies which parts of $H$ correspond
	to which parts of $G$.
	
 	A planar graph $G$ with $n$ vertices and maximum degree $3$ (and also $4$)
	can be drawn in a square grid of polynomial size, 
	and such a drawing can be obtained in polynomial time, see, e.g., the results by Storer~\cite{s84} or by Tamassia and Tollis~\cite{tt89}.
	Applying the shear mapping $(x,y)\mapsto (x+y/2,y\sqrt{2}/3)$ to the plane, 
	the square grid becomes a subgraph of $\Gamma$.
	Therefore, we can obtain a $\Gamma$-representation $(H_1,\varphi_1)$ 
	of $G$ of polynomial size. Note that we only use edges of $\Gamma$ that
	are horizontal or have positive slope; edges of $\Gamma$ with negative slope are not used.
	
	Next, we obtain another $\Gamma$-representation $(H_2,\varphi_2)$ 
	such that $H_2$ is an \emph{induced} subgraph of $\Gamma$.
	Induced means that two vertices of $H_2$ are connected with an edge in $H_2$ 
    if and only if the edge exists in $\Gamma$.
	For this, we first scale up the $\Gamma$-representation $(H_1,\varphi_1)$ by a factor $2$ so that
	each edge of $H_1$ becomes a 2-edge path. The new vertices used in the subdivision
	have degree $2$ and its $2$ incident edges have the same orientation.
	After the subdivision, vertices of degree $3$ look like in Figure~\ref{fig:triangulargrid2}.
	Scaling up the figure by a factor of $3$, and rerouting within a small neighbourhood
	of each vertex $v$ that was already in $H_1$, we obtain a 
	$\Gamma$-representation $(H_2,\varphi_2)$ 
	such that $H_2$ is an \emph{induced} subgraph of $\Gamma$.
	See Figure~\ref{fig:triangulargrid2} for an example of such a local transformation.

	\begin{figure}
		\centering
		\includegraphics[page=2,width=.8\textwidth]{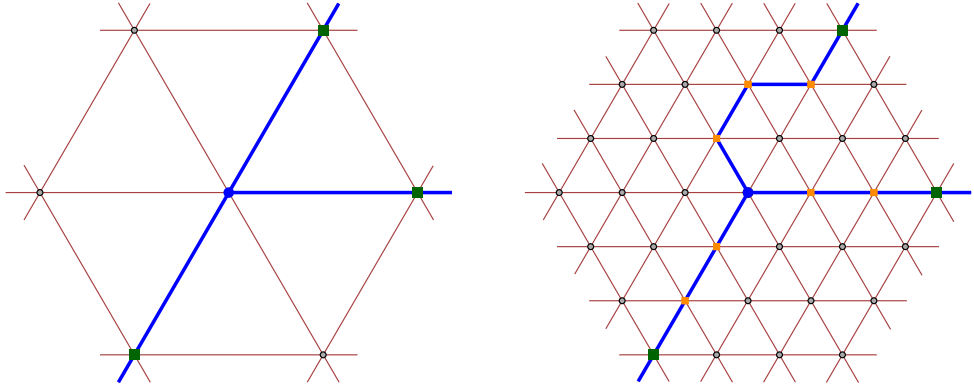}
		\caption{Transformation to get an induced subgraph of the triangular grid.
			Vertices from the subdivision of edges are green squares.}
		\label{fig:triangulargrid2}
	\end{figure}
	
	Now we have a $\Gamma$-representation $(H_2,\varphi_2)$ such that $H_2$ is
	an induced subgraph of $\Gamma$. We want to obtain another $\Gamma$-representation where for each edge $uv\in E(G)$ the path
	$\varphi_2(uv)$ uses an even number of interior edges. For this, we can slightly reroute
	each path $\varphi_2(uv)$ that has an odd number of interior points, see Figure~\ref{fig:triangulargrid3}. 
	To make sure that the graph is still induced, we can first scale up the situation by 
	a factor $2$, and then reroute all the edges $\varphi_2(uv)$ that use an odd number of
	interior vertices. (This is actually all the edges $uv\in E(G)$	because of the scaling.)
	Let $(H_3,\varphi_3)$ be the resulting $\Gamma$-representation of $G$.
	Note that $H_3$ is an induced subgraph of $\Gamma$ and it is a subdivision
	of $G$ where each edge is subdivided an even number of times.

	\begin{figure}
		\centering
		\includegraphics[page=3,scale=1.5]{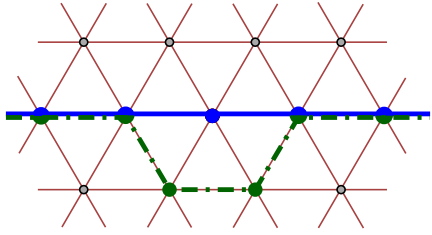}
		\caption{Choosing the parity of paths.}
		\label{fig:triangulargrid3}
	\end{figure}
	
	Let $\alpha(G)$ denote the size of the largest independent set in $G$.
	For each edge $uv$ of $G$, let $2k_{uv}$ be the number of internal
	vertices in the path $\varphi_3(uv)$.
	Then $\alpha(H_3)=\alpha(G)+\sum_{uv\in E(G)} k_{uv}$.
	Indeed, we can obtain $H_3$ from $G$ by repeatedly replacing an edge by a $3$-edge path, i.e., making 2 subdivisions on the same edge. Moreover, any such replacement increases the size of the largest independent set by exactly 1.
	
	It follows that the problem {\sc Independent Set} is NP-complete
	in \emph{induced} subgraphs of the triangular grid $\Gamma$.
	This is precisely the problem {\sc Independent Set on Induced Triangular Grid},
	where we take $A$ to be the set of vertices defining the induced subgraph.
\end{proof}

\subsection{The Point Set}
Let $m\ge 3$ be an arbitrary integer and consider the point set $P_m$ defined by (see Figure~\ref{fig:grid1})
\[
	P_m ~=~ \{ (x,y,z)\in \NN^3\mid x+y+z=m\}.
\]

\begin{figure}
	\centering
	\includegraphics[page=1,scale=.8]{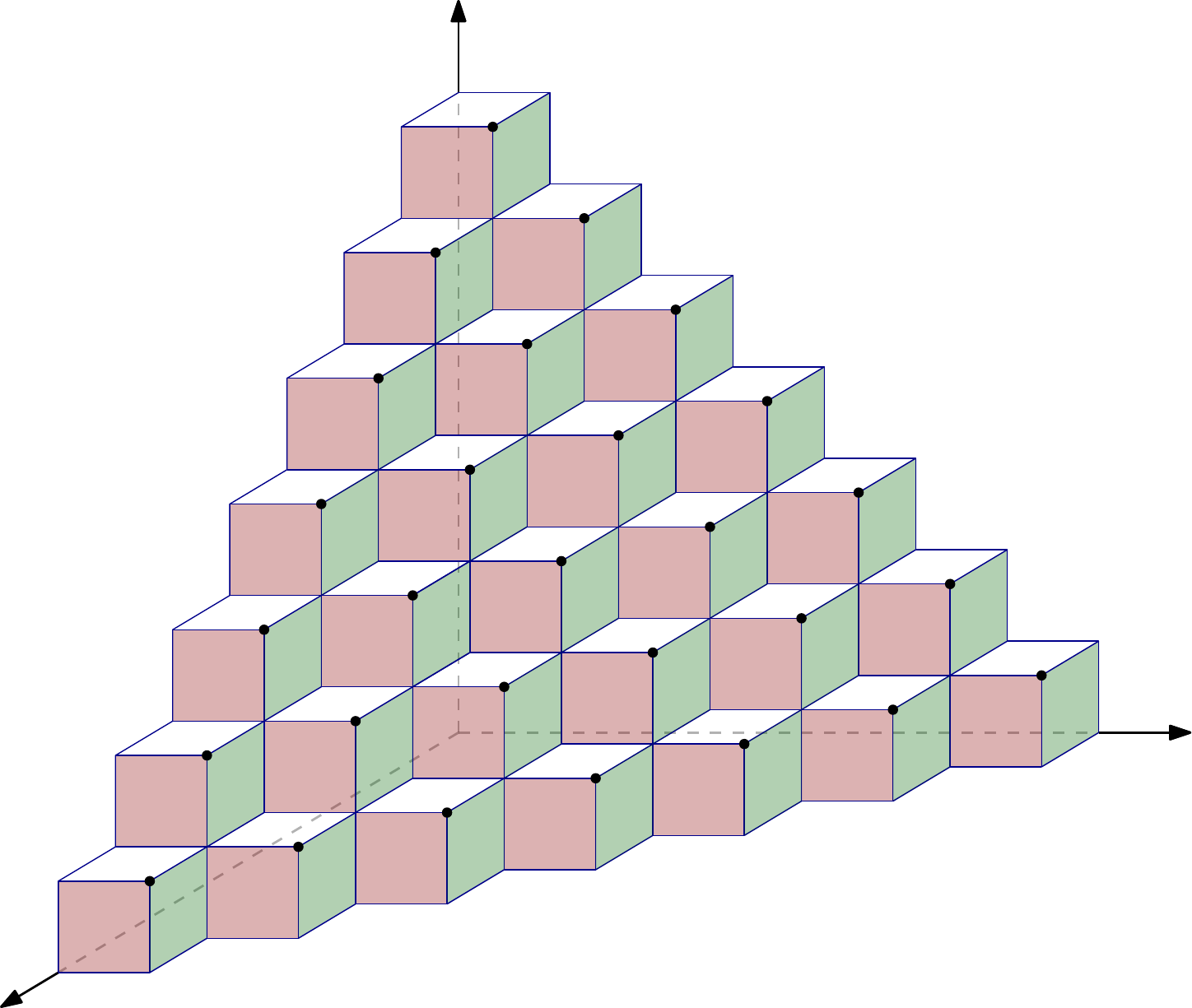}
	\caption{The point set $P_m$ and the boxes $\bbox(p)$, with $p\in P_m$, for $m=9$.}
	\label{fig:grid1}
\end{figure}

\noindent
Standard induction shows that
the set $P_m$ has $1+2+\dots+(m-2)=(m-1)(m-2)/2$ points
and that 
\[
	\mu(P_m) ~=~ \vol \left( \bigcup_{p\in P_m} \bbox(p) \right) ~=~
	m(m-1)(m-2)/6.
\]
This last number appears as sequence A000292, tetrahedral (or triangular pyramidal) numbers,
in~\cite{seq}.

Consider the real number $\eps= 1/4m^2$, 
and define the vector $\Delta_\eps=(\eps,\eps,\eps)$.
Note that $\eps$ is much smaller than $1$.
For each point $p\in P_{m-1}$, consider the point $p+\Delta_\eps$, see Figure~\ref{fig:grid2}.
Let us define the set $Q_m$ to be 
\[
	Q_m ~=~ \{ p+\Delta_\eps \mid p\in P_{m-1} \}.
\]
It is clear that $Q_m$ has $|P_{m-1}|=(m-2)(m-3)/2$ points, for $m\ge 3$.
The points of $Q_m$ lie on the plane $x+y+z=m-1+3\eps$.

\begin{figure}
	\centering
	\includegraphics[page=2,scale=.7]{figs/grid}
	\caption{The point $q=p+\Delta_\eps$ and the set $\diff(q)$.}
	\label{fig:grid2}
\end{figure}

For each point $q$ of $Q_m$ define 
\[
	\diff(q) ~=~ \unio\big(P_m \cup \{q\}\big) \setminus \unio\big(P_m\big) ~=~ \left(\bigcup_{p\in P_m\cup \{q \}} \bbox(p)\right)\setminus
	 \left( \bigcup_{p\in P_m} \bbox(p)\right).
\]
Note that $\diff(q)$ is the union of $3$ boxes of size $\eps\times \eps\times 1$
and a cube of size $\eps\times\eps\times\eps$, see Figure~\ref{fig:grid2}.
To get the intuition for the following lemma, 
see Figure~\ref{fig:grid3}.

\begin{figure}
	\centering
	\includegraphics[page=3,scale=.8]{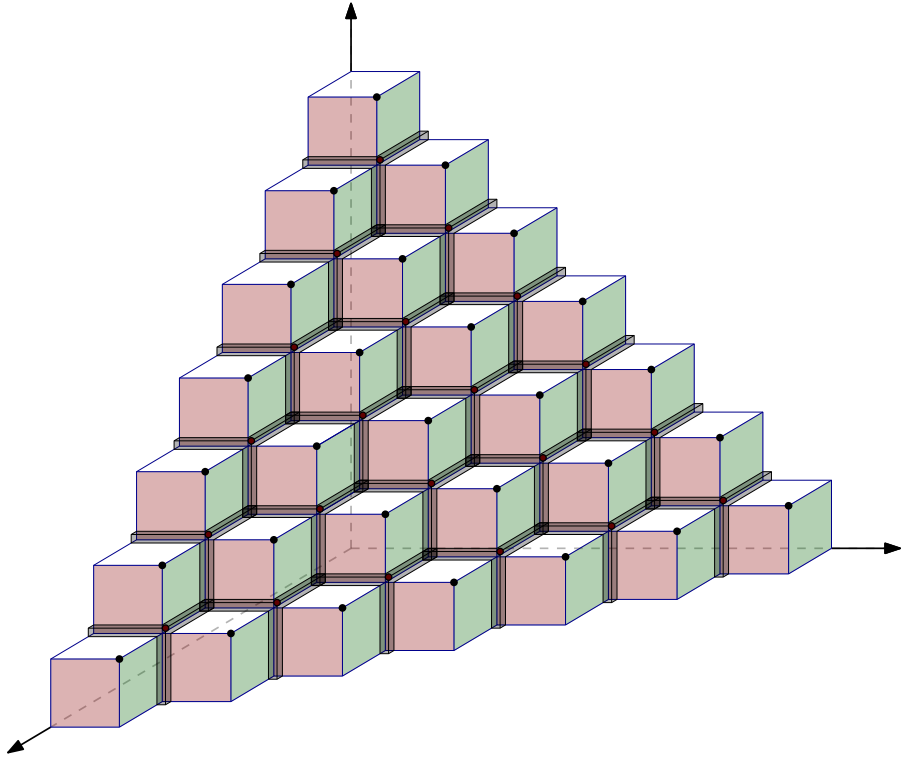}
	\caption{The sets $\diff(q)$ for all $q\in Q_m$.}
	\label{fig:grid3}
\end{figure}

\begin{lemma}
\label{lem:reduction1}
	Consider any $Q'\subseteq Q_m$.
	\begin{itemize}
	\item If the sets $\diff(q)$, for all $q\in Q'$, are pairwise disjoint, then
		$\mu(P_m\cup Q')=\mu(P_m) + |Q'|\cdot (3 \eps^2 + \eps^3)$. 
	\item If $Q'$ contains two points $q_0$ and $q_1$ such that $\diff(q_0)$ and $\diff(q_1)$ intersect, then
		$\mu(P_m\cup Q')<\mu(P_m) + |Q'|\cdot (3 \eps^2 + \eps^3)$.
	\end{itemize}
\end{lemma}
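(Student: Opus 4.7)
The plan is to reduce both bullets to the elementary identity
\[
\mu(P_m \cup Q') - \mu(P_m) \;=\; \vol\Bigl(\bigcup_{q \in Q'} \diff(q)\Bigr),
\]
which is immediate from the definition of $\diff(q)$ and $\unio(P_m \cup Q') = \unio(P_m) \cup \bigcup_{q \in Q'} \bbox(q)$. Given the preceding description of $\diff(q)$ as three $\eps \times \eps \times 1$ slabs plus an $\eps^3$ corner cube, all lying in distinct octants around the corner of $\bbox(q)$, one has $\vol(\diff(q)) = 3\eps^2 + \eps^3$. Subadditivity of volume then yields $\mu(P_m \cup Q') \le \mu(P_m) + |Q'|(3\eps^2 + \eps^3)$, with equality exactly when the $\diff(q)$'s are pairwise disjoint, giving the first bullet.

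For the second bullet, strict inequality follows provided every non-empty intersection $\diff(q_0) \cap \diff(q_1)$ (with $q_0 \ne q_1$) has \emph{strictly positive} volume. Writing $q_i = p_i + \Delta_\eps$ with $p_i \in P_{m-1}$, the set $\diff(q_i)$ is contained in the small box $\prod_j [p_{i,j}-1, p_{i,j}+\eps]$. Since $\eps < 1$ and the $p_{i,j}$ are integers, non-emptiness of the intersection forces $|p_{0,j} - p_{1,j}| \le 1$ coordinatewise; combined with $p_0 \ne p_1$ and $\sum_j p_{0,j} = \sum_j p_{1,j} = m-1$, the difference $p_1 - p_0$ must be one of the six permutations of $(+1,-1,0)$, i.e., $p_0$ and $p_1$ are neighbours in the triangular lattice on the plane $x+y+z=m-1$.

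The core of the argument is then a direct case check, with six symmetric cases. For $p_1 - p_0 = (1,-1,0)$, for instance, I would point to the slab of $\diff(q_0)$ that is long in the $y$-direction, namely $[p_{0,1}, p_{0,1}+\eps] \times [p_{0,2}-1, p_{0,2}] \times [p_{0,3}, p_{0,3}+\eps]$, and the slab of $\diff(q_1)$ that is long in the $x$-direction, namely $[p_{0,1}, p_{0,1}+1] \times [p_{0,2}-1, p_{0,2}-1+\eps] \times [p_{0,3}, p_{0,3}+\eps]$; their intersection is the $\eps \times \eps \times \eps$ cube $[p_{0,1}, p_{0,1}+\eps] \times [p_{0,2}-1, p_{0,2}-1+\eps] \times [p_{0,3}, p_{0,3}+\eps]$, contributing volume $\eps^3 > 0$. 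The main subtlety I would want to rule out along the way is a measure-zero touching---two slabs sharing only a face, which would leave a non-empty but zero-volume intersection and spoil the strict inequality---but a pairwise inspection shows that whenever two slab coordinate intervals meet at a single integer point, the slabs are already disjoint in another coordinate, so no such measure-zero intersections actually arise.
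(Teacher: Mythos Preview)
Your proposal is correct and follows essentially the same approach as the paper: both reduce to the identity $\mu(P_m\cup Q')-\mu(P_m)=\vol\bigl(\bigcup_{q\in Q'}\diff(q)\bigr)$, use $\vol(\diff(q))=3\eps^2+\eps^3$, and obtain the strict inequality in the second bullet from the positive volume of $\diff(q_0)\cap\diff(q_1)$. The paper simply asserts that this intersection is an $\eps\times\eps\times\eps$ cube ``by the geometry of the point set,'' whereas you carry out the neighbour classification and slab-by-slab check explicitly and also rule out measure-zero touchings; this is more detail, not a different argument.
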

\begin{proof}
	Note that for each $q\in Q_m$ we have
	\[
		\mu(P_m\cup \{q \})-\mu(P_m) ~=~ \vol(\diff(q)) ~=~
		3 \eps^2 + \eps^3.
	\]

	If the sets $\{ \diff(q)\mid q\in Q'\}$ are pairwise disjoint
	then
	\begin{align*}
		\mu(P_m\cup Q') ~&=~ \mu(P_m) +\vol\left( \bigcup_{q\in Q'} \diff(q)\right) \\
		&=~ \mu(P_m) + \sum_{q\in Q'} \vol(\diff(q)) \\
		&=~ \mu(P_m) + |Q'| \cdot \left(3 \eps^2 + \eps^3\right).
	\end{align*}
	
	Consider now the case when $Q'$ contains two points $q_0$ and $q_1$ 
	such that $\diff(q_0)$ and $\diff(q_1)$ intersect.
	The geometry of the point set $Q'$ implies
	that $\diff(q_0)$ and $\diff(q_1)$ intersect 
	in a cube of size $\eps\times\eps\times\eps$, see Figure~\ref{fig:grid3}.
	Therefore, we have
	\begin{align*}
		\mu(P_m\cup Q') ~&=~ 
		\mu(P_m) + 
		\vol\left( \bigcup_{q\in Q'} \diff(q)\right) \\
		&\le~ \mu(P_m) + \sum_{q\in Q'} \vol(\diff(q)) - \vol(\diff(q_0)\cap \diff(q_1)) \\
		&=~ \mu(P_m) + |Q'| \cdot \left(3 \eps^2 + \eps^3\right) - \eps^2 \\
		&< \mu(P_m) + |Q'| \cdot \left(3 \eps^2 + \eps^3\right). \tag*{\qedhere}
	\end{align*}
\end{proof}

We can define naturally a graph $T_m$ on the set $Q_m$ 
by using the intersection of the sets $\diff(\cdot)$.
The vertex set of $T_m$ is $Q_m$, and two points $q,q'\in Q_m$ define
an edge $qq'$ of $T_m$ if and only if $\diff(q)$ and $\diff(q')$ intersect, see Figure~\ref{fig:grid4}. Simple geometry shows that $T_m$ is
isomorphic to a part of the triangular grid $\Gamma$.
Thus, choosing $m$ large enough, we can get an arbitrarily large portion of the triangular grid $\Gamma$.
Note that a subset of vertices $Q'\subseteq Q_m$ is independent in $T_m$ if and only if 
the sets $\left\{ \diff(q)\mid q\in Q' \right)$ are pairwise disjoint.

\begin{figure}
	\centering
	\includegraphics[page=4,scale=.8]{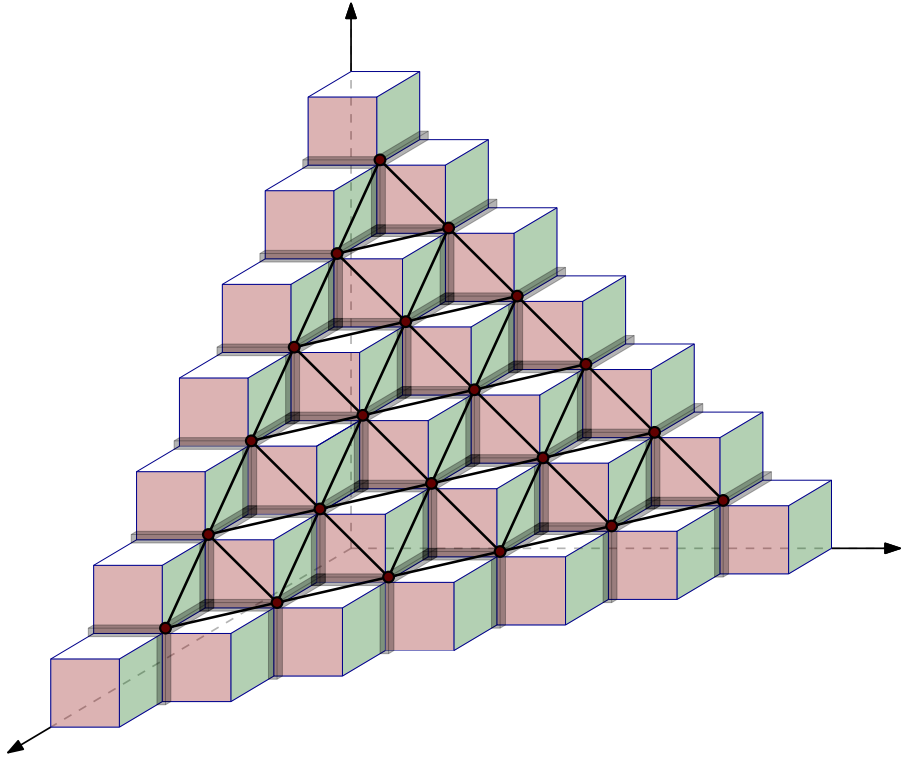}
	\caption{The graph $T_m$ for $m=9$.}
	\label{fig:grid4}
\end{figure}

We next show that picking points in $P_m$ has higher priority than picking points in $Q_m$.

\begin{lemma}
\label{lem:allP}
	Let $P'$ be a subset of $P_m$ such that $P_m\setminus P'$ is not empty.
	Then $\mu(P'\cup Q_m)< \mu(P_m)$.
\end{lemma}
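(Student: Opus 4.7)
The plan is to show that removing any single point from $P_m$ uncovers a full unit cube of volume inside $\unio(P_m)$, while $\unio(Q_m)$ can refill only an $O(\eps)$-thin slice of that cube plus an $O(\eps^2)$-small region outside $\unio(P_m)$ — well short of the unit that was lost.

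Concretely, for each $p = (x, y, z) \in P_m$ I would consider the half-open unit cube $R(p) = (x-1, x] \times (y-1, y] \times (z-1, z]$. These cubes are pairwise disjoint and contained in $\bbox(p)$. Any other point $p' \in P_m$ has the same coordinate sum $m$ but differs from $p$, so at least one of its integer coordinates is $\le x - 1$, which makes $R(p) \cap \bbox(p') = \emptyset$. Hence $R(p) \subseteq \unio(P_m) \setminus \unio(P')$ for every $p \in P_m \setminus P'$. Writing $U = \unio(P_m)$, $U' = \unio(P')$, $V = \unio(Q_m)$, and using $U' \subseteq U$, we get
\[
\mu(P_m) - \mu(P' \cup Q_m) \;=\; \vol\bigl((U \setminus U') \setminus V\bigr) \;-\; \vol(V \setminus U),
\]
where the subtracted term is at most $|Q_m|(3\eps^2 + \eps^3)$ by applying the union bound to $V \setminus U = \bigcup_{q\in Q_m} \diff(q)$ and invoking Lemma~\ref{lem:reduction1}, while the first term is at least $\sum_{p \in P_m \setminus P'} \vol(R(p) \setminus V)$.

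The main technical step is to show $\vol(R(p) \cap V) \le 3\eps$ for every $p \in P_m \setminus P'$. A point $(a, b, c) \in R(p)$ lies in $V$ only if some $p' \in P_{m-1}$ satisfies $p'_i + \eps \ge $ the $i$-th coordinate; since $p'_i$ is a positive integer, having $a > x - 1 + \eps$ forces $p'_1 \ge x$, and analogously for $b, c$. Unless at least one of $a, b, c$ is in the $\eps$-thin slab adjacent to the lower face of $R(p)$, we would need $p'_1 + p'_2 + p'_3 \ge m$, which contradicts $p' \in P_{m-1}$. Hence $R(p) \cap V$ is contained in the union of three $\eps$-thin slabs of total volume at most $3\eps$.

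Combining, $\mu(P_m) - \mu(P' \cup Q_m) \ge (1 - 3\eps)\,|P_m \setminus P'| \;-\; |Q_m|(3\eps^2 + \eps^3)$. With $\eps = 1/(4m^2)$ and $|Q_m| < m^2/2$, the subtracted term is $O(1/m^2)$ and the first is at least $1 - O(1/m^2)$, so the difference is strictly positive. The main obstacle I anticipate is handling the corner case where some coordinate of $p$ equals $1$, so the constraint $p'_i \ge 1$ becomes binding before the $\lceil \cdot - \eps \rceil$ bookkeeping kicks in; a small case analysis shows the same $\eps$-thin-slab bound still applies, because the coordinate-sum obstruction remains.
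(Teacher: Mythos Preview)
Your proof is correct and takes a genuinely different route from the paper's.

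The paper first reduces to the case $|P_m\setminus P'|=1$ by monotonicity, then explicitly identifies the six points of $Q_m$ whose boxes reach into the unit hole left by the missing $p$ --- three ``near'' neighbours $Q_m^1(p)$ contributing an extra $\eps$ each and three ``far'' neighbours $Q_m^2(p)$ contributing an extra $\eps^2$ each --- and sums these case-by-case contributions to obtain $\mu(P'\cup Q_m)\le \mu(P_m)-1+|Q_m|(3\eps^2+\eps^3)+6\eps<\mu(P_m)$. Your argument instead sets up the clean identity $\mu(P_m)-\mu(P'\cup Q_m)=\vol((U\setminus U')\setminus V)-\vol(V\setminus U)$ and bounds each term uniformly: the unit cube $R(p)$ isolates the hole, and the coordinate-sum obstruction $p'_1+p'_2+p'_3=m-1<m$ shows that $V$ can only meet $R(p)$ in three $\eps$-thin slabs. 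This avoids enumerating neighbours, needs no figure, and handles several missing points at once without the reduction step. The paper's approach gives sharper constants (it recovers exactly the $6\eps$ rather than your $3\eps$ per missing point, which happens to coincide for one missing point but diverges for more); yours is more conceptual and would scale better if the combinatorics of $Q_m$ were less transparent. Incidentally, the corner case you flag (some coordinate of $p$ equal to $1$) is a non-issue: when $x=1$ the implication ``$a>x-1+\eps\Rightarrow p'_1\ge x$'' becomes ``$a>\eps\Rightarrow p'_1\ge 1$'', which holds vacuously, so the same slab bound applies without modification.
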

\begin{proof}
	Assume that $P_m\setminus P'$ contains exactly one point, denoted by $p$.
	Having a smaller set $P'$ can only decrease the value of $\mu(P'\cup Q_m)$.
	Then 
	\[ 
		\mu(P') ~=~ \mu(P_m)-1 .
	\]

	Consider the sets of $3$ points 
	\begin{align*}
		Q^1_m(p)~=~\{ &(p_x-1,p_y,p_z)+\Delta_\eps , 
			(p_x,p_y-1,p_z)+\Delta_\eps , 
			(p_x,p_y,p_z-1)+\Delta_\eps \}
		~\subseteq~ Q_m, \\
		Q^2_m(p)~=~\{ &(p_x-1,p_y-1,p_z+1)+\Delta_\eps , 
			(p_x+1,p_y-1,p_z-1)+\Delta_\eps ,\\ 
			&(p_x-1,p_y+1,p_z-1)+\Delta_\eps \}
		~\subseteq~ Q_m.		
	\end{align*}
	Figure~\ref{fig:grid5} is useful for the following computations.
	For each point $q\in Q^1_m(p)$ we have
	\[
		\mu(P'\cup q) ~=~ \mu(P')+ \vol(\diff(q)) + \eps.
	\]
	For each point $q\in Q^2_m(p)$ we have
	\[
		\mu(P'\cup q) ~=~ \mu(P')+ \vol(\diff(q)) + \eps^2.
	\]
	Using that $\eps^2\le \eps$ because $0<\eps<1$, we get
	\[
		\forall q\in Q^1_m(p) \cup Q^2_m(p):~~~
			\mu(P'\cup q) ~\le~ \mu(P')+ \vol(\diff(q)) + \eps.
	\]
	For all points $q$ of $Q_m\setminus (Q^1_m(p)\cup Q^2_m(p))$ we have
	\[
		\mu(P'\cup q) ~=~ \mu(P')+ \vol(\diff(q)).
	\]
	
	\begin{figure}
	\centering
	\includegraphics[page=5,width=\textwidth]{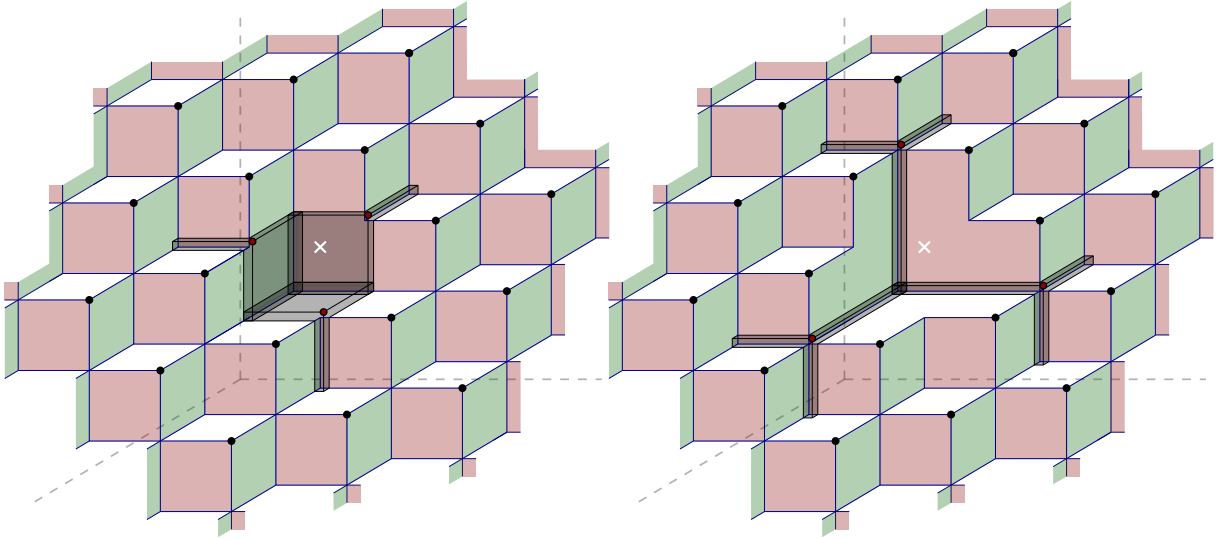}
	\caption{Image for the proof of Lemma~\ref{lem:allP}. The point $p$ of $P_m$
		that is missing in $P'$ is indicated with a white cross. 
		Left: the contribution of points from $Q^1_m(p)$.
		Right: the contribution of points from $Q^2_m(p)$.}
	\label{fig:grid5}
	\end{figure}
	
	\noindent
	We thus have 
	\begin{align*}
		\mu(P'\cup Q_m) ~&\le~ \mu(P')+ \sum_{q\in Q_m} \vol(\diff(q)) 
			+ \sum_{q\in Q^1_m(p)\cup Q^2_m(p)} \eps \\
			&=~ \mu(P_m)-1 + |Q_m| \cdot (3 \eps^2 + \eps^3) 
			+ 6\cdot \eps \\
			&\le~ \mu(P_m)-1 + \frac{(m-2)(m-3)}{2}\cdot 4\cdot \eps 
			+ 6\cdot \eps \\
			&< ~ \mu(P_m),
	\end{align*}
	where the last step uses $\eps= 1/4m^2$.
\end{proof}

\subsection{The Reduction}

We are now ready to prove NP-completeness of {\sc 3d  Volume Selection}.

\begin{theorem}
	The problem {\sc 3d  Volume Selection} is NP-complete.
\end{theorem}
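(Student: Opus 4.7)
The plan is to reduce from {\sc Independent Set on Induced Triangular Grid}, which is NP-hard by Lemma~\ref{lem:Independent Set Induced Triangular Grid}, combining this with the structural Lemmas~\ref{lem:reduction1} and~\ref{lem:allP} already in place. Membership in NP is immediate: given a candidate subset $Q \subseteq P$ with $|Q|=k$, the value $\mu(Q)$ can be computed in polynomial time by the $O(n \log n)$ algorithm of~\cite{beume2009complexity} for Klee's measure problem on $3$-dimensional anchored boxes, and all coordinates and threshold values arising in our construction are rationals of polynomial bit complexity.

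Given an instance $(A,\ell)$, I would first pick $m$ polynomial in the input size so that, after an appropriate integer translation, $A$ embeds into the triangular patch of $\Gamma$ to which $T_m$ is isomorphic; let $Q_A \subseteq Q_m$ denote the corresponding copy of $A$. Such an $m$ exists because the vertex coordinates in $A$ have polynomial bit complexity while $T_m$ realises a triangular region of $\Gamma$ with $\Theta(m)$ vertices per side. The reduction then outputs
\[ P := P_m \cup Q_A, \qquad k := |P_m| + \ell, \qquad V := \mu(P_m) + \ell \cdot (3\eps^2 + \eps^3), \]
with $\eps = 1/(4m^2)$ as before. Under this identification, two points of $Q_A$ are adjacent in $T_m$ if and only if the corresponding vertices of $A$ are adjacent in $\Gamma[A]$.

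For the forward direction, if $B \subseteq A$ is independent of size $\ell$, let $B' \subseteq Q_A$ be its image. Then $|P_m \cup B'| = k$ and, since $B$ is independent, the sets $\diff(q)$ for $q \in B'$ are pairwise disjoint; the first bullet of Lemma~\ref{lem:reduction1} gives $\mu(P_m \cup B') = V$. For the reverse direction, suppose $S \subseteq P$ satisfies $|S|=k$ and $\mu(S) \ge V$, and split $S = S_P \cup S_Q$ with $S_P \subseteq P_m$ and $S_Q \subseteq Q_A$. If $S_P \subsetneq P_m$, then by monotonicity of $\mu$ together with Lemma~\ref{lem:allP}, $\mu(S) \le \mu(S_P \cup Q_m) < \mu(P_m) < V$, contradicting our assumption. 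Hence $S_P = P_m$ and $|S_Q| = \ell$; if the points of $S_Q$ were not pairwise independent in $T_m$, the second bullet of Lemma~\ref{lem:reduction1} would yield $\mu(S) < V$, again a contradiction. Therefore $S_Q$ corresponds to an independent set of size $\ell$ in $\Gamma[A]$, as required.

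The main point that warranted the careful setup above is the quantitative balance between ``losing'' a point of $P_m$ and ``gaining'' points in $Q_A$: the choice $\eps = 1/(4m^2)$ is tuned precisely so that the total gain obtainable from all of $Q_m$, bounded by $|Q_m|(3\eps^2+\eps^3) + O(\eps) = O(1/m^2)$, is dwarfed by the loss of $1$ (a unit contribution) incurred by omitting even a single point of $P_m$; this is exactly the content of Lemma~\ref{lem:allP}. Once this bookkeeping is handled and the embedding of $\Gamma[A]$ into $T_m$ is fixed, the argument reduces to the standard pattern ``disjoint-volume gadgets realise independent sets''.
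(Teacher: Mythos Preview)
Your proposal is correct and follows essentially the same route as the paper: reduce from {\sc Independent Set on Induced Triangular Grid} by setting $P=P_m\cup Q_A$, $k=|P_m|+\ell$, $V=\mu(P_m)+\ell(3\eps^2+\eps^3)$, use Lemma~\ref{lem:reduction1} for the forward direction, and combine Lemma~\ref{lem:allP} (to force $P_m\subseteq S$) with Lemma~\ref{lem:reduction1} (to force independence of $S_Q$) for the converse. Your version is even a bit more careful than the paper in spelling out NP membership, the polynomial choice of $m$, and the monotonicity step $\mu(S)\le\mu(S_P\cup Q_m)$.
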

\begin{proof}
	It is obvious that the problem is in NP.
	To show hardness we reduce from the problem {\sc Independent Set Induced Triangular Grid},
	shown to be NP-complete in Lemma~\ref{lem:Independent Set Induced Triangular Grid}.
	
	Consider an instance $(A,\ell)$ to {\sc Independent Set on Induced Triangular Grid},
	where $A$ is a subset of the vertices of the triangular grid $\Gamma$
	and $\ell$ is an integer. 
	Take $m$ large enough so that $T_m$ is isomorphic to an induced 
	subgraph of $\Gamma$ that contains $A$. 
	Recall that $\eps=1/4m^2$.
	For each vertex $v$ of $T_m$ let $\psi_{\Gamma}(v)$ be the corresponding vertex of $\Gamma$.
	For each subset $B$ of $A$, let $Q_m(B)$ be the subset of $T_m$ that corresponds to $B$, that is, 
	$Q_m(B)=\{ q\in Q_m\mid \psi_{\Gamma}(q)\in B\}$.
	
	Consider the set of points $P=P_m\cup Q_m(A)$, 
	the parameter $k=(m-1)(m-2)/2+\ell$,
	and the value $V=\frac{m(m-1)(m-2)}{6} + \ell\cdot (3\eps^2+\eps^3)$.
	We claim that $(A,\ell)$ is a yes-instance for {\sc Independent Set on Induced Triangular Grid}
	if and only if $(P,k,V)$ is a yes-instance for {\sc 3d Volume Selection}.
	
	If $(A,\ell)$ is a yes-instance for {\sc Independent Set on Induced Triangular Grid},
	there is a subset $B\subseteq A$ of $\ell$ independent vertices in $\Gamma$.
	This implies that $Q_m(B)$ is an independent set in $T_m$, that is,
	the sets $\{ \diff(q)\mid q\in Q_m(B)\}$ are pairwise disjoint.
	Lemma~\ref{lem:reduction1} then implies that 
	\begin{align*}
		\mu(P_m\cup Q_m(B))~&=~ \mu(P_m)+ |B|\cdot (3\eps^2+\eps^3) \\
		&=~ \frac{m(m-1)(m-2)}{6} + \ell \cdot (3\eps^2+\eps^3) \\
		&=~ V.
	\end{align*}
	Therefore $P_m\cup Q_m(B)$ is a subset of $P$ with $|P_m|+|B|=(m-1)(m-2)/2+\ell =k$
	points such that $\mu(P_m\cup Q_m(B))=V$. It follows that 
	$(P,k,V)$ is a yes-instance for {\sc 3d Volume Selection}.
	
	Assume now that $(P,k,V)$ is a yes-instance for {\sc 3d Volume Selection}.
	This means that $P$ contains a subset $Q$ of $k$ points such that
	\[
		\mu(Q)~\ge~ V ~=~ \frac{m(m-1)(m-2)}{6} + \ell\cdot (3\eps^2+\eps^3)
			~=~ \mu(P_m) + \ell\cdot (3\eps^2+\eps^3) ~>~ \mu(P_m).
	\]
	Because of Lemma~\ref{lem:allP}, it must be that $P_m$ is contained in $Q$,
	as otherwise we would have $\mu(Q)<\mu(P_m)$.
	Since we have $P_m\subset Q$ and $P=P_m \cup Q_m(A)$, we obtain that
	$Q$ is $P_m\cup Q_m(B)$ for some $B\subseteq A$.
	Moreover, $|B|=k-|P_m|=\ell$.
	By Lemma~\ref{lem:reduction1}, if $Q_m(B)$ is not an independent set in $T_m$,
	we have
	\[
		\mu (Q) ~=~ \mu (P_m\cup Q_m(B)) ~<~ \mu (P_m)+ \ell (3\eps^2 +\eps) ~=~ V,
	\]
	which contradicts the assumption that $\mu (Q)\ge V$.
	Therefore it must be that $Q_m(B)$ is an independent set in $T_m$.
	It follows that $B\subset A$ has size $\ell$ and forms an independent set in $\Gamma$,
	and thus $(A,\ell)$ is a yes-instance for {\sc Independent Set on Induced Triangular Grid}.
\end{proof}

\section{Exact Algorithm in 3 Dimensions}
\label{sec:exact}

In this section we design an algorithm to solve \volsel\ in 3 dimensions in time $n^{O(\sqrt{k})}$. The main insight is that, for an optimal solution $Q^*$,
the boundary of $\unio(Q^*)$ is a planar graph with $O(k)$ vertices, and therefore
has a balanced separator with $O(\sqrt{k})$ vertices. We would like to guess the separator,
break the problem into two subproblems, and solve each of them recursively.
This basic idea leads to a few technical challenges to take care of.
One obstacle is that subproblems should be really independent because we do not want 
to double count some covered parts. 
Essentially, a separator in the graph-theory sense does not imply
independent subproblems in our context.
Another technicality is that some of the subproblems that we encounter recursively 
cannot be solved optimally; we can only get a lower bound to the optimal value. 
However, for the subproblems that define the optimal solution at the higher
level of the recursion, we do compute an optimal solution.

Let $P$ be a set of $n$ points in the positive quadrant of $\RR^3$.
Through our discussion, we will assume that $P$ is fixed and thus
drop the dependency on $P$ and $n$ from the notation.
We can assume that no point of $P$ is dominated by
another point of $P$. 
Using an infinitesimal perturbation of the points,
we can assume that all points have all coordinates different.
Indeed, we can replace each point $p$ by the point $p+i(\eps,\eps,\eps)$,
where $i$ is a different integer for each point of $P$ and $\eps>0$ is
an infinitesimal value or a value that is small enough.

Let $M$ be the largest $x$- or $y$-coordinate in $P$, 
thus $M=\max\{p_x,p_y\mid p\in P\}$.
We define $\sigma$ to be the square in $\RR^2$
defined by $[-1,M+1]\times [-1,M+1]$.
It has side length $M+2$.

For each subset $Q$ of $P$, consider the projection of $\unio(Q)$ 
onto the $xy$-plane. This defines a plane graph, which we denote by $G(Q)$,
and which we define precisely in the following, see Figure~\ref{fig:exact1}.
We consider $G(Q)$ as a geometric, embedded graph where each vertex
is a point and each edge is (drawn as) a straight-line segment, in fact,
a horizontal or vertical straight-line segment on the $xy$-plane.
There are different types of vertices in $G(Q)$.
The projection of each point $q\in Q$ defines a vertex, which we denote by $v_q$.
When for two distinct points $q,q'\in Q$
the boundary of the projection of the boxes $\bbox(q)$
and the boundary of the projection of $\bbox(q')$ intersect outside the 
$x$- and $y$-axis, then they do so exactly once because of our 
assumption on general position, and this defines a vertex
that we denote by $v_{q,q'}$. (Not all pairs $(q,q')$ define such a vertex.)
Additionally, each point $q\in Q$ defines a vertex $v_{x,q}$ at position $(q_x,0)$
and a vertex $v_{y,q}$ at position $(0,q_y)$. 
Finally, we have a vertex $v_{x,y}$ placed at the origin.
The vertices of $G(Q)$ are connected in a natural way: the boundary
of the visible part of $\bbox(q)$ connects the points that appear 
on that boundary. In particular, the vertices on the $x$-axis
are connected and so do those on the $y$-axis.
Since we assume general position, each vertex uniquely determines the
boxes that define it. 
Each vertex $q\in Q$ defines a bounded face $f(q,Q)$ in $G(Q)$.
This is the projection of the face on the boundary of $\unio(Q)$
contained in the plane $\{ (x,y,z)\in \RR^3\mid z=q_z\}$, see Figure~\ref{fig:exact1}, right.
In fact, each bounded face of $G(Q)$ is $f(q,Q)$ for some $q\in Q$.

\begin{figure}
	\centering
	\includegraphics[page=1,scale=1.4]{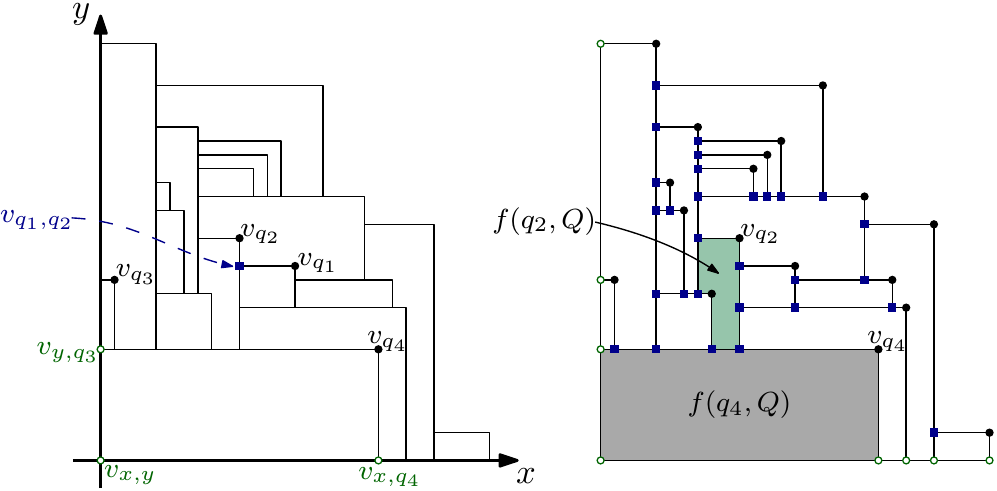}
	\caption{A sample of the different vertices in $G(Q)$ and the faces of $G(Q)$.}
	\label{fig:exact1}
\end{figure}

We triangulate each bounded face $f(q,Q)$ of $G(Q)$ \emph{canonically}, as follows, see Figure~\ref{fig:exact2}.
The boundary of a bounded face $f(q,Q)$ is made of 
a top horizontal segment $t(q,Q)$ (which may contain several edges of the graph), 
a right vertical segment $r(q,Q)$ (which may contain several edges of the graph),
and a monotone path $\gamma(q,Q)$ from the top, left corner to the bottom, right
corner. Such a monotone path $\gamma(q,Q)$ alternates horizontal and vertical segments
and has non-decreasing $x$-coordinates and non-increasing $y$-coordinates.
Let $v_t(q,Q)$ be the first interior vertex of $\gamma(q,Q)$ and
let $v_r(q,Q)$ be the last interior vertex of $\gamma(q,Q)$.
Note that $v_q$ is the vertex where $t(q,Q)$ and $r(q,Q)$ meet. 
We add diagonals from $v_q$ to all interior vertices of $\gamma(q,Q)$,
diagonals from $v_t(q,Q)$ to all the interior vertices of $t(q,Q)$
and diagonals from $v_r(q,Q)$ to all the interior vertices of $r(q,Q)$. 
This is the canonical triangulation of the face $f(q,Q)$,
and we apply it to each bounded face of $G(Q)$.

\begin{figure}
	\centering
	\includegraphics[page=2,scale=1]{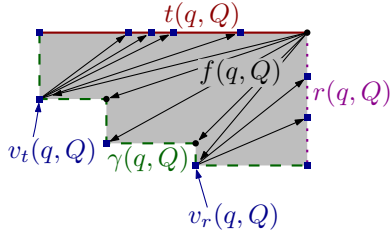}
	\caption{Triangulating a bounded face of $G(Q)$.}
	\label{fig:exact2}
\end{figure}

The outer face of $G(Q)$ may also have many vertices. 
We place on top the square $\sigma$, with vertices $\{ -1,M+1\}^2$.
From the vertices at $(-1,-1)$ and $(M+1,M+1)$ 
we add all possible edges, while keeping planarity.
From the vertex $(-1,M+1)$ we add the edges to $(-1,-1)$, to $(M+1,M+1)$,
and to the highest vertex on the $y$-axis.
Similarly, from the vertex $(M+1,-1)$ we add the edges to $(-1,-1)$, to $(M+1,M+1)$,
and to the rightmost vertex on the $x$-axis.
With such an operation, the outer face is defined by the boundary of the square $\sigma$.
 
Let $T(Q)$ be the resulting geometric, embedded graph, see Figure~\ref{fig:exact3}.
The graph $T(Q)$ is a triangulation of the square $\sigma$ with internal vertices.
It is easy to see that $G(Q)$ and $T(Q)$ have $O(|Q|)$ vertices and edges.
For example, one can argue that $G(Q)$ has $|Q|+1$ faces and no parallel edges,
and the graph $T(Q)$ is a triangulation of $G(Q)$ with $4$ additional vertices.
To treat some extreme cases, we also define $T(\emptyset)=\sigma$, as a graph,
with the diagonal of positive slope.

\begin{figure}
	\centering
	\includegraphics[page=3,scale=1.2]{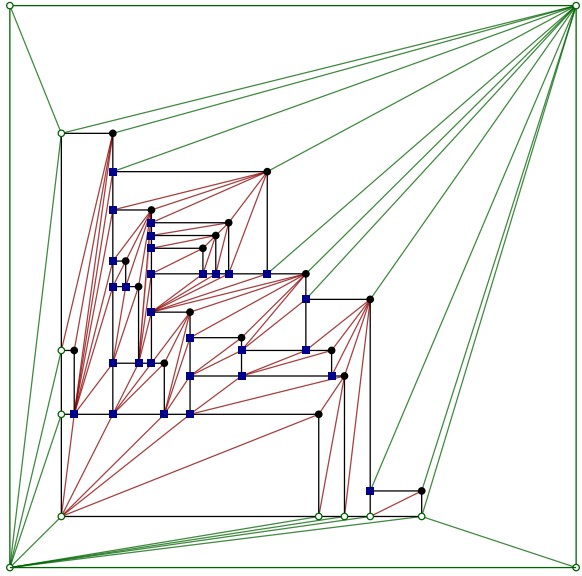}
	\caption{The graph $T(Q)$.}
	\label{fig:exact3}
\end{figure}

A polygonal domain is a subset of the plane defined by a polygon where we
remove the interior of some polygons, which form holes. 
The combinatorial complexity of a domain $D$, denoted by $|D|$,
is the number of vertices and edges used to define it.
We say that a polygonal curve or a family of polygonal curves in $\RR^2$ 
is \DEF{$Q$-compliant} if 
the edges of of the curves are also edges of $T(Q)$.
A polygonal domain $D$ is \DEF{$Q$-compliant}
if its boundary is $Q$-compliant. 
Note that a $Q$-compliant polygonal domain has combinatorial complexity $O(|Q|)$
because the graph $T(Q)$ has $O(|Q|)$ edges.

Consider a set $Q\subseteq P$ and a $Q$-compliant polygonal curve $\gamma$.
Let $P_\gamma$ be the points of $P$ that participate in the definition
of the vertices on $\gamma$. 
Thus, if $v_q$ is in $\gamma$,
we add $q$ to $P_\gamma$; if $v_{q,q'}$ is in $\gamma$, 
we add $q$ and $q'$ to $P_\gamma$; 
if $v_{x,q}$ is in $\gamma$, we add $q$ to $P_\gamma$, and so on.
Since each vertex on $\gamma$ contributes $O(1)$ vertices
to $P_\gamma$, we have $|P_\gamma|=O(|\gamma|)$.
For a family $\Gamma$ of polygonal curves we define 
$P_\Gamma=\cup_{\gamma\in \Gamma} P_\gamma$.
For a polygonal domain $D$ with boundary $\partial D$
we then have $|P_{\partial D}| = O(|D|)$.

\begin{lemma}
\label{lem:compliant}
	If $\gamma$ is a $Q$-compliant polygonal curve
	then, for each $Q'$ with $P_\gamma\subseteq Q'\subset Q$,
	the curve $\gamma$ is also $Q'$-compliant.
\end{lemma}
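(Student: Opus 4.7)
The plan is to reduce by induction on $|Q\setminus Q'|$ to the case of removing a single point $q^*\in Q\setminus P_\gamma$, with $Q'=Q\setminus\{q^*\}$; iterating the single-deletion case then establishes the general statement. Vertex preservation is immediate: each vertex of $\gamma$ is of type $v_q$, $v_{q,q'}$, $v_{x,q}$, $v_{y,q}$, or a corner/axis vertex of $\sigma$, and since all its defining points lie in $P_\gamma\subseteq Q'$, the same geometric condition produces it as a vertex of $T(Q')$. For the edges I would perform a case analysis according to how $e$ arises in $T(Q)$.

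If $e$ is an edge of $G(Q)$ (a maximal segment on the boundary of $\bbox(p)$'s projection for some $p\in P_\gamma$, or on an axis), the key observation is that the interior of $e$ lies on the upper envelope of $\unio(Q)$, so $\bbox(q^*)$ cannot cover any point of the interior of $e$ from strictly above. Removing $q^*$ therefore leaves the upper envelope unchanged in a neighborhood of $e$, so no new vertex of $G(Q')$ can appear strictly inside $e$, and the endpoints of $e$ remain vertices; hence $e$ is still an edge of $G(Q')\subseteq T(Q')$. If $e$ is a canonical-triangulation diagonal inside a face $f(p,Q)$, its ``hub'' (one of $v_p$, $v_t(p,Q)$, $v_r(p,Q)$) and its other endpoint have all of their defining points in $P_\gamma$. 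I would then argue that the roles of these hubs (as the first/last interior vertex of $\gamma(p,\cdot)$) and the role of the other endpoint (as an interior vertex of $\gamma(p,\cdot)$, $t(p,\cdot)$, or $r(p,\cdot)$) are preserved when passing from $Q$ to $Q'$: any change in these roles would require $q^*$ to appear among the defining points of an affected vertex of $\gamma$, forcing $q^*\in P_\gamma$ and contradicting our choice. Consequently $e$ is also a diagonal of the canonical triangulation of $f(p,Q')$. Outer-face triangulation edges are handled analogously, using that the extremal vertices involved (highest on the $y$-axis, rightmost on the $x$-axis, or Pareto-maximal corners seen from the corners of $\sigma$) remain extremal in any $Q'\subseteq Q$ that contains $P_\gamma$.

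The main obstacle is the canonical-triangulation case, because the triangulation of $f(p,\cdot)$ is a specific combinatorial choice that can shift when points are removed: in particular, $v_t(p,Q)$ or $v_r(p,Q)$ could change, or a vertex could move between the staircase and the top/right part of the face boundary. The central leverage in each such situation is the same small observation: any such combinatorial reconfiguration would force $q^*$ to be a defining point of a vertex of $\gamma$, which is ruled out by $q^*\notin P_\gamma$. This reduces the stability of the canonical triangulation along $\gamma$ to a clean uniformity statement about which points are allowed to ``touch'' the edges of $\gamma$.
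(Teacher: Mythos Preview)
Your argument for edges of $G(Q)$ contains a genuine gap. You claim that since $\bbox(q^*)$ cannot cover the interior of $e$ from strictly above, removing $q^*$ ``leaves the upper envelope unchanged in a neighborhood of $e$, so no new vertex of $G(Q')$ can appear strictly inside $e$.'' This inference fails. An edge $e$ of $G(Q)$ lying on $x=p_x$ separates a higher face $f(p,Q)$ from a lower face $f(p',Q)$, and nothing forces $p'$ to be a defining point of either endpoint of $e$: the upper endpoint can be $v_p$ (when $p'_y\ge p_y$) and the lower endpoint can be $v_{p,r}$ for some $r$ with $r_z>p_z$ and $r_x>p_x$, giving $P_e=\{p,r\}$ while $p'\notin P_e$. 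Taking $q^*=p'$ removes the face on the low side of $e$; the region formerly occupied by $f(p',Q)$ is taken over by still-lower boxes, and if two such boxes trade dominance at some $y^*$ inside the $y$-range of $e$, a new vertex of $G(Q')$ appears at $(p_x,y^*)$. Thus $e$ is subdivided in $G(Q')$ and is no longer a single edge of $T(Q')$. The same mechanism undermines your diagonal case: the ``roles'' of $v_t(p,\cdot)$, $v_r(p,\cdot)$ and the set of interior vertices of $t(p,\cdot)$, $r(p,\cdot)$ can change when $q^*$ is the point defining the adjacent lower face, without $q^*$ being a defining point of any vertex of~$\gamma$; your leverage ``$q^*\in P_\gamma$'' does not apply.

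What does survive is the weaker assertion that each edge $e$ of $\gamma$, as a point set, remains contained in the $1$-skeleton of $T(Q')$ (possibly split into several edges). This is exactly what the paper's one-sentence proof asserts --- ``the edge $e$ is also \emph{contained in} $T(\tilde Q)$'' and ``$T(Q')$ contains $\gamma$'' --- and it is all that the later uses of the lemma require. Under that reading of $Q'$-compliance your induction scheme and vertex-preservation step are fine, but the edge argument must be redone to establish containment in the $1$-skeleton rather than preservation as a single edge.
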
	
\begin{proof}
	For each edge $e$ of $T(Q)$, the edge $e$ is also
	contained in $T(\tilde Q)$ for all $\tilde Q$ that contain~$P_e$.
	It follows that $T(Q')$ has all the edges $e$ contained in $\gamma$,
	and thus $T(Q')$ contains $\gamma$.
\end{proof}

We are going to use dynamic programming based on planar separators of $T(Q^*)$ for an
optimal solution $Q^*$. 
A \DEF{valid tuple} to define a subproblem is a tuple $(S,D,\ell)$, where 
$S\subset P$,
$D$ is an $S$-compliant polygonal domain, and
$\ell$ is a positive integer.
The tuple $(S,D,\ell)$ models a subproblem where the points of $S$ are
already selected to be part of the feasible solution, $D$ is a $S$-compliant domain so that
we only care about the volume inside the cylinder $D\times \RR$,
and we can still select $\ell$ points from $P\cap (D\times \RR)$.
We have two different values associated to each valid tuple, depending on 
which subsets $Q$ of vertices from $P\cap D$ can be selected:
\begin{align*}
	\optfree(S,D,\ell) ~=~ \max\{ &\vol(\unio(S\cup Q)\cap (D\times \RR)) \mid 
		Q\subset P\cap (D\times \RR),~|Q|\le \ell \}.\\
	\optcomp(S,D,\ell) ~=~ \max\{ &\vol(\unio(S\cup Q)\cap (D\times \RR)) \mid 
		Q\subset P\cap (D\times \RR),~|Q|\le \ell, \\ 
		& \text{$D$ is $(S\cup Q)$-compliant}\}.	
\end{align*}
Obviously, we have for all valid tuples $(S,D,\ell)$
\[
	\optcomp(S,D,\ell)~\le~ \optfree(S,D,\ell).
\]
On the other hand, we are interested in the valid tuple $(\emptyset,\sigma,k)$,
for which we have 
$\optfree(\emptyset,\sigma,k)=\optcomp(\emptyset,\sigma,k)$.

We would like to get a recursive formula for 
$\optfree(S,D,\ell)$ or $\optcomp(S,D,\ell)$ using planar separators.
More precisely, we would like to use a separator in $T(S\cup Q^*)$ for
an optimal solution, and then branch on all possible such separators.
However, none of the two definitions seem good enough for this.
If we would use $\optfree(S,D,\ell)$, then we divide into 
domains that may have too much freedom and the interaction between
subproblems gets complex.
If we would use $\optcomp(S,D,\ell)$, then merging the problems
becomes an issue. Thus, we take a mixed route where
we argue that, for the valid tuples that are relevant 
for finding the optimal solution, we actually have $\optfree=\optcomp$.

We start showing how to compute
$\optcomp(S,D,\ell)$ in the obvious way. 
This will be used to solve the base cases of the recursion.
\begin{lemma}
\label{lem:bottomrecursion}
	We can compute $\optcomp(S,D,\ell)$ 
	in $O(n^{\ell+2} )$ time.
\end{lemma}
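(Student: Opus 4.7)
The plan is pure brute-force enumeration. The number of candidate subsets $Q\subseteq P\cap(D\times\RR)$ with $|Q|\le \ell$ is at most $\sum_{i=0}^{\ell}\binom{n}{i}=O(n^\ell)$, so I can afford $O(n^2)$ work per candidate to match the claimed bound. For each candidate $Q$ I have to (i) decide whether $D$ is $(S\cup Q)$-compliant and (ii) if so, compute $\vol\bigl(\unio(S\cup Q)\cap (D\times\RR)\bigr)$; then I take the maximum.

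Step (i) is easy: construct the graph $T(S\cup Q)$ directly from its definition in $O(n^2)$ time, which is possible since $|S\cup Q|=O(n)$ and $T(\cdot)$ has linear size in its argument, and then check that every edge of $\partial D$ appears as an edge of $T(S\cup Q)$. If the check fails, discard $Q$. For step (ii) I would use the following arrangement-based scheme. Project the boxes $\bbox(q)$, $q\in S\cup Q$, onto the $xy$-plane and overlay them with the boundary of the polygonal domain $D$; the resulting planar arrangement $\mathcal{A}$ has combinatorial complexity $O(n^2)$ and can be built in $O(n^2)$ time. For each cell $c$ of $\mathcal{A}$ that lies inside $D$, define
\[
h(c)~:=~\max\bigl\{\, q_z:\, q\in S\cup Q \text{ and } c\subseteq [0,q_x]\times[0,q_y]\,\bigr\},
\]
with $h(c):=0$ if the set is empty. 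These values can be computed for all cells in total time $O(n^2)$ by processing the $O(n)$ projected rectangles in decreasing order of $z$-coordinate and labelling cells on first visit. The clipped volume is then $\sum_{c\subseteq D}\mathrm{area}(c)\cdot h(c)$.

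Taking the maximum of the clipped volumes over all compliant candidates gives $\optcomp(S,D,\ell)$ in total time $O(n^\ell)\cdot O(n^2)=O(n^{\ell+2})$. The only step with any real content is the volume computation, and the naive arrangement-based method above is plenty; since this lemma is used only for the base cases of the recursion, where $\ell$ stays small, sharper Klee-type algorithms are unnecessary, and the main ``obstacle'' is simply to state an explicit per-iteration polynomial routine that cleanly handles the clipping to the non-rectangular domain $D\times\RR$.
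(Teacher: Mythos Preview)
Your proposal is correct and follows essentially the same approach as the paper: enumerate all $O(n^\ell)$ candidate subsets $Q$, for each build $T(S\cup Q)$ to test compliance of $D$, and if compliant compute the clipped volume via the projection onto the $xy$-plane in $O(n^2)$ time. The paper is less explicit about the volume computation, simply citing ``standard approaches \ldots\ working with the projection onto the $xy$-plane'' and noting that the exact polynomial degree is irrelevant; your arrangement-based description is a concrete instantiation of exactly that.
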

\begin{proof}
	We enumerate all the subsets $Q$ of $P\cap D$ with $\ell$ points,
	and for each such $Q$ we proceed as follows.
	We first build $T(S\cup Q)$
	and check whether $D$ is contained in the edge set of $T(S\cup Q)$.
	If it is not, then $D$ is not $(S\cup Q)$-compliant and we move
	to the next subset $Q$.
	Otherwise, we compute $\unio(S\cup Q)$, its restriction to $D\times \RR$,
	and its volume. Standard approaches can be used to do this
	in $O((|S|+|Q|+|D|)^2)= O(n^2)$ time, 
	for example working with the projection onto the $xy$-plane.
	(The actual degree of the polynomial is not relevant.)
	This procedure enumerates $O(|P|^\ell)=O(n^\ell)$ subsets of $P$ and 
	for each one spends $O(n^2)$ time. The result follows.
\end{proof}

A \DEF{valid partition} $\pi$ of $(S,D,\ell)$ is a
collection of valid tuples 
$\pi=\left\{ (S_1,D_1,\ell_1),\dots, (S_t,D_t,\ell_t) \right\}$
such that
\begin{itemize}
	\item $S_1=\dots=S_t=S\cup S_0$ for some set $S_0\subset P\cap D$;
	\item $|S_0| = O\left( \sqrt{ |S|+\ell } \right)$;
	\item the domains $D_1$,$\dots$, $D_t$ have pairwise disjoint interiors
		and $D=\bigcup_i D_i$; 
	\item $\ell= |S_0|+ \sum_i \ell_i$; and
	\item $\ell_i\le 2\ell/3$ for each $i=1,\dots,t$.
\end{itemize}
Let $\Pi(S,D,\ell)$ be the family of valid partitions for the tuple $(S,D,\ell)$.
We remark that different valid partitions may have different cardinality.

\begin{lemma}
\label{lem:recursionfree}
	For each valid tuple $(S,D,\ell)$ we have
	\[
		\optfree(S,D,\ell)~\ge~ \max_{\pi \in \Pi (S,D,\ell)}~
			\sum_{(S',D',\ell')\in \pi} \optfree (S',D',\ell').
	\]
\end{lemma}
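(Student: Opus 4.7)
The plan is to prove the inequality by exhibiting, for any fixed valid partition $\pi \in \Pi(S,D,\ell)$, an explicit feasible selection for the tuple $(S,D,\ell)$ whose covered volume inside $D\times\RR$ is at least the sum on the right-hand side. Since the inequality asks only for a lower bound on $\optfree(S,D,\ell)$, I do not need to worry about compliance of the merged solution with $T(S\cup Q)$; this is the whole reason the statement is phrased in terms of $\optfree$ on both sides rather than $\optcomp$.

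Concretely, I would fix a valid partition $\pi = \{(S_1,D_1,\ell_1),\dots,(S_t,D_t,\ell_t)\}$ with $S_i = S \cup S_0$ as in the definition, and for every $i$ pick a subset $Q_i \subseteq P \cap (D_i\times\RR)$ with $|Q_i|\le \ell_i$ realizing $\optfree(S_i,D_i,\ell_i)$. Define the candidate
\[
	Q ~=~ S_0 \cup Q_1 \cup \dots \cup Q_t ~\subseteq~ P \cap (D\times\RR).
\]
Counting: $|Q| \le |S_0| + \sum_i |Q_i| \le |S_0| + \sum_i \ell_i = \ell$, so $Q$ is an admissible selection for the tuple $(S,D,\ell)$ in the definition of $\optfree$, and $\optfree(S,D,\ell) \ge \vol(\unio(S\cup Q) \cap (D\times\RR))$.

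It remains to lower-bound the latter volume by $\sum_i \optfree(S_i,D_i,\ell_i)$. Because the domains $D_1,\dots,D_t$ have pairwise disjoint interiors and together cover $D$, volume is additive over the cylinders $D_i\times\RR$:
\[
	\vol\bigl(\unio(S\cup Q) \cap (D\times\RR)\bigr) ~=~ \sum_{i=1}^{t} \vol\bigl(\unio(S\cup Q) \cap (D_i\times\RR)\bigr).
\]
For each $i$, the set $S\cup Q$ contains $S \cup S_0 \cup Q_i = S_i \cup Q_i$, so by monotonicity of $\unio$ under inclusion and of volume,
\[
	\vol\bigl(\unio(S\cup Q) \cap (D_i\times\RR)\bigr) ~\ge~ \vol\bigl(\unio(S_i \cup Q_i) \cap (D_i\times\RR)\bigr) ~=~ \optfree(S_i,D_i,\ell_i),
\]
by our choice of $Q_i$. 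Summing over $i$ and taking the maximum over $\pi \in \Pi(S,D,\ell)$ on the resulting bound gives the claim.

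The main conceptual point, rather than obstacle, is that the additivity step relies only on the geometric condition that the $D_i$ partition $D$ (modulo boundaries of measure zero), and the inequality per piece uses only monotonicity — no compliance of $D_i$ with $S_i\cup Q_i$ is needed, which is exactly why the inequality goes in only one direction and why the matching upper bound will require the separator argument producing a partition in which compliance is preserved. The only bookkeeping care is that the $Q_i$ may overlap at boundaries between adjacent $D_i$; this is harmless here because taking their union only decreases the total cardinality, so the cardinality constraint $|Q|\le\ell$ is still satisfied.
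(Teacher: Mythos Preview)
Your proof is correct and follows essentially the same approach as the paper's own proof: fix a valid partition, pick optimal sets $Q_i$ for each piece, pool them with $S_0$ into a single candidate $Q$ for $(S,D,\ell)$, verify the cardinality bound, split the volume over the disjoint cylinders $D_i\times\RR$, and use monotonicity of $\unio$ on each piece. Your bookkeeping is, if anything, slightly more careful than the paper's (you write $|Q|\le\ell$ rather than claiming equality, which correctly allows for overlaps among the $Q_i$ and $S_0$).
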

\begin{proof}
	For any valid partition $\pi\in \Pi(S,D,\ell)$,
	let $S_\pi$ be the smallest set such that 
	$S'=S\cup S_\pi$ for all tuples $(S',D',\ell')\in \pi$. 
	This means that $S_\pi= S'\setminus S$ for an arbitrary $(S',D',\ell')\in \pi$.
	For each such tuple $(S',D',\ell')\in \pi$, let $Q^*(S',D',\ell')$ be an
	optimal solution to $\optfree(S',D',\ell')$, and define 
	\[
		Q_\pi= S_\pi \cup \bigcup_{(S',D',\ell')\in \pi} Q^*(S',D',\ell').
	\]
	Then from the properties of valid partitions we have
	\[
		|Q_\pi| ~=~ |S_\pi| + \sum_{(S',D',\ell')\in \pi} |Q^*(S',D',\ell')| ~=~ 
		|S_\pi| + \sum_{(S',D',\ell')\in \pi} \ell' ~=~ \ell.
	\]
	Obviously, $Q_\pi$ is contained in $P\cap D$ because $D$ contains $P\cap D_i$.
	
	We have seen that for each valid partition $\pi\in \Pi(S,D,\ell)$
	the set $Q_\pi$ is a feasible solution considered in
	the problem $\optfree(S,D,\ell)$.
	Therefore 
	\[ \optfree(S,D,\ell) ~\ge~ \max_{\pi \in \Pi(S,D,\ell)} \vol(\unio(S\cup Q_\pi)\cap (D\times \RR)). \]
	Using that the interiors of $\{ D'\mid (S',D',\ell')\in\pi\}$ are pairwise disjoint, and then using that $S'\cup Q^*(S',D',\ell')$ is contained in $S\cup Q_{\pi}$ for all $(S',D',\ell')\in\pi$, we obtain
	\begin{align*}
		\optfree(S,D,\ell) ~&\ge~ \max_{\pi \in \Pi(S,D,\ell)} \sum_{(S',D',\ell')\in\pi} 
				\vol(\unio(S\cup Q_\pi)\cap (D'\times \RR))\\ 
			&\ge~ \max_{\pi \in \Pi(S,D,\ell)} \sum_{(S',D',\ell')\in\pi} 
				\vol(\unio(S'\cup Q^*(S',D',\ell'))\cap (D'\times \RR)).
	\end{align*}	
	Since $Q^*(S',D',\ell')$ is optimal for $\optfree(S',D',\ell')$, we obtain the desired
	\[ \optfree(S,D,\ell) ~\ge~  \max_{\pi \in \Pi(S,D,\ell)} \sum_{(S',D',\ell')\in\pi} \optfree(S',D',\ell'). \qedhere
	\]
\end{proof}

\begin{lemma}
\label{lem:recursioncompliant}
	For each valid tuple $(S,D,\ell)$ we have
	\[
		\optcomp(S,D,\ell)~\le~ \max_{\pi \in \Pi (S,D,\ell)}~
			\sum_{(S',D',\ell')\in \pi} \optcomp (S',D',\ell').
	\]
\end{lemma}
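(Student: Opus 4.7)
I would prove the inequality by exhibiting, for any valid tuple $(S,D,\ell)$ and an optimal witness $Q^*$ for $\optcomp(S,D,\ell)$ (so $|Q^*|\le \ell$, $D$ is $(S\cup Q^*)$-compliant, and, without loss of generality, $S\cap Q^*=\emptyset$), a single valid partition $\pi$ whose sum of $\optcomp$-values is at least $\optcomp(S,D,\ell)$; the maximum over $\Pi(S,D,\ell)$ is then no smaller. Since $T(S\cup Q^*)$ is a planar triangulation of $\sigma$ on $O(|S|+\ell)$ vertices, applying a planar cycle-separator theorem with each $v_q$ (for $q\in Q^*$) given weight~$1$ and all other vertices weight~$0$ yields a simple cycle $\gamma$ of length $O(\sqrt{|S|+\ell})$ made of edges of $T(S\cup Q^*)$ that splits $D$ into two $(S\cup Q^*)$-compliant regions $D_1,D_2$, each containing at most $(2/3)|Q^*|\le (2/3)\ell$ of the projected vertices $\{v_q:q\in Q^*\}$.

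The crux of the construction is the choice of $S_0$. I would take $S_0\subseteq Q^*$ to be $(P_\gamma\setminus S)$ together with the \emph{straddlers} --- those $q\in Q^*$ whose canonical face $f(q,S\cup Q^*)$ is cut by $\gamma$. All witnesses of $\gamma$ lie in $S\cup Q^*$, so $P_\gamma\setminus S\subseteq Q^*$; and a straddler's face must contain at least one interior canonical diagonal of $T(S\cup Q^*)$ used by $\gamma$. Since the canonical diagonals of distinct faces are pairwise disjoint, straddlers inject into the interior edges of $\gamma$, giving $|S_0|\le|P_\gamma|+|\gamma|=O(\sqrt{|S|+\ell})$. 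For $i=1,2$, I then set $Q_i=\{q\in Q^*\setminus S_0:v_q\in D_i^\circ\}$ and pick budgets $\ell_i\ge |Q_i|$ with $\ell_1+\ell_2=\ell-|S_0|$; the separator's balance gives $|Q_i|\le (2/3)\ell$, and the slack $\ell-|Q^*|\ge 0$ can be absorbed while keeping each $\ell_i\le 2\ell/3$.

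Validity of $\pi=\{(S\cup S_0,D_1,\ell_1),(S\cup S_0,D_2,\ell_2)\}$ reduces to checking that each $D_i$ is $(S\cup S_0)$-compliant, which Lemma~\ref{lem:compliant} delivers: $\partial D_i$ consists of pieces of $\partial D$ (edges of $T(S)\subseteq T(S\cup S_0)$) and pieces of $\gamma$ (whose witness set $P_\gamma$ lies in $S\cup S_0$ by construction). The heart of the argument is the volume identity
\[
\vol\bigl(\unio(S\cup Q^*)\cap (D_i\times \RR)\bigr) ~=~ \vol\bigl(\unio(S\cup S_0\cup Q_i)\cap (D_i\times \RR)\bigr),
\]
which I would prove cell by cell by showing that every face of $T(S\cup Q^*)$ lying in $D_i$ has its owner in $S\cup S_0\cup Q_i$. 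Owners in $S$ are immediate; if $v_q\in\gamma$ then $q\in P_\gamma\setminus S\subseteq S_0$; if $v_q\in D_j^\circ$ for $j\neq i$ then $f(q,\cdot)$ meets $D_i$ via the face owned by $q$, so it is cut by $\gamma$ and $q$ is a straddler, hence in $S_0$; a straddler with $v_q\in D_i^\circ$ is likewise in $S_0$; every remaining owner has $f(q,\cdot)\subseteq D_i$ with $v_q\in D_i^\circ$, so $q\in Q_i$. Summing the identity over $i$ and bounding each side by $\optcomp(S\cup S_0,D_i,\ell_i)$ yields the lemma.

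The main obstacle is precisely this volume identity, since a point $q\in Q^*$ with $v_q\in D_j^\circ$ can still be the topmost box at some $(x,y)\in D_i$ --- its box projection $[0,q_x]\times [0,q_y]$ always reaches the origin --- so simply dropping such $q$ from the subproblem for $D_i$ would undercount the top-surface inside $D_i$. This is exactly why $S_0$ must include the straddlers in addition to $P_\gamma$, and the delicate combinatorial charging (face interiors are disjoint, so each straddler pays a distinct edge of $\gamma$) is what keeps $|S_0|$ within the planar-separator bound $O(\sqrt{|S|+\ell})$ and makes the whole partition valid.
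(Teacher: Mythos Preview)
Your approach is the paper's: take an optimal $Q^*$, apply Miller's cycle separator to $T(S\cup Q^*)$ with weights on the $v_q$'s, set $S_0$ from the separator, and verify the volume identity cell by cell via face ownership. The one substantive difference is that you enlarge $S_0$ by the \emph{straddlers} --- those $q$ whose face $f(q,S\cup Q^*)$ is crossed by~$\gamma$ --- and then bound their number by an injection into the edges of~$\gamma$.

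This enlargement is correct but unnecessary: the canonical triangulation is designed precisely so that every straddler already lies in $P_\gamma$. Every diagonal inside $f(q,\cdot)$ has at least one endpoint in $\{v_q\}\cup\bigl(\text{interior vertices of }t(q,\cdot)\bigr)\cup\bigl(\text{interior vertices of }r(q,\cdot)\bigr)$, and every such vertex is either $v_q$ itself or of the form $v_{q,q''}$, hence records~$q$. Thus if $\gamma$ uses any diagonal of $f(q,\cdot)$ then $q\in P_\gamma$ automatically. The paper exploits this (tersely) in the one-line justification of equation~\eqref{eq:2}, which is why $S_0=P_\gamma\setminus S$ suffices there without any separate straddler bookkeeping.

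Two small slips in your write-up. First, $\gamma$ need not split $D$ into exactly two pieces: $D$ may have holes and $\gamma$ may wander outside $D$, so the paper correctly allows $t\ge 2$ subdomains. Second, the inclusion ``$T(S)\subseteq T(S\cup S_0)$'' is false in general (adding points can destroy edges of the old triangulation); the correct route to $(S\cup S_0)$-compliance of $\partial D$ is Lemma~\ref{lem:compliant} applied with ambient set $S\cup Q^*$, using that $D$ is $(S\cup Q^*)$-compliant and $P_{\partial D}\subseteq S\subseteq S\cup S_0$.
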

\begin{proof}
	Let $Q^*$ be the optimal solution defining $\optcomp(S,D,\ell)$.
	Thus, $Q^*\subseteq P\cap D$ has at most $\ell$ points, 
	$D$ is $(S\cup Q^*)$-compliant,	and
	\[
		\optcomp(S,D,\ell) ~=~ \vol(\unio(S\cup Q^*)\cap (D\times \RR)).
	\]
	Consider the triangulation $T(S\cup Q^*)$. This is a $3$-connected planar graph.
	Recall that the boundary of $D$ is contained in $T(S\cup Q^*)$ because $D$ is
	$(S\cup Q^*)$-compliant. 
	Note that $T(S\cup Q^*)$ has $O(|S \cup Q^*|)= O(|S|+\ell)$ vertices.
	
	Assign weight $1/|Q^*|$ to the vertices $v_q$, $q\in Q^*$,
	and weight $0$ to the rest of vertices in $T(S\cup Q^*)$. 
	The sum of the weights is obviously $1$.
	Because of the cycle-separator theorem of Miller~\cite{Miller86},
	there is a cycle $\gamma$ in $T(S\cup Q^*)$ with $O(\sqrt{|S|+\ell})$ vertices,
	such that the interior of $\gamma$ has at most $2|Q^*|/3$ vertices of $Q^*$ and
	the exterior of $\gamma$ has at most $2|Q^*|/3$ vertices of $Q^*$. 
	
	Since $\gamma$ has $O(\sqrt{|S|+\ell})$ vertices, the set $P_\gamma$ also
	has $O(\sqrt{|S|+\ell})$ vertices. Note that $P_\gamma\subseteq S\cup Q^*$.
	Take $S_0=P_\gamma\setminus S$, so that $S\cup P_\gamma$ is the disjoint
	union of $S$ and $S_0$.
	Because of Lemma~\ref{lem:compliant}, the 
	domain $D$ and the cycle $\gamma$ are $(S\cup S_0)$-compliant.

	The cycle $\gamma$ breaks the domain $D$ into at least $2$ domains.
	Let $\D= \{ D_1,\dots,D_t\}$ be those domains.
	Since the boundary of each domain $D_i\in \D$ is contained in $\partial D\cup \gamma$,
	each domain $D_i\in \D$ is $(S\cup S_0)$-compliant.
	For each domain $D_i\in \D$, let 
	$Q^*_i=\{ q\in Q^* \setminus (S\cup S_0)\mid v_q\in D_i\}$
	and let $\ell_i=|Q^*_i|$. 
	Since the interior of $D_i$ is either in the interior or the exterior of $\gamma$, 
	we have $\ell_i\le 2\ell/3$	for each $D_i\in \D$.
	Moreover, $|\ell| = |S_0| + \sum_i \ell_i$ because the points $q$ of $Q^*$
	that could be counted twice have the corresponding vertex $v_q$ on $\gamma$, 
	but then they also belong to $P_\gamma\subset S\cup S_0$ 
	and thus cannot belong to $Q_i^*$.
	
	The properties we have derived imply that
	$\pi_\gamma =\{ (S\cup S_0,D_i,\ell_i)\mid i=1,\dots,t \}$ is a valid partition of $(S,D,\ell)$,
	and thus $\pi_\gamma\in \Pi(S,D,\ell)$.
	Moreover $Q^*_i$ is a feasible solution for the problem 
	$\optcomp(S\cup S_0,D_i,\ell_i)$.
	Indeed, since $D_i$ is $(S\cup S_0)$-compliant and $(S\cup Q^*)$-compliant,
	Lemma~\ref{lem:compliant} implies that $D_i$ is also 
	$(S\cup S_0\cup Q^*_i)$-compliant.
	
	Note that, for each $(S\cup S_0,D_i,\ell_i)$ in the partition $\pi_\gamma$
	we have
	\begin{equation}
	\label{eq:2}
		\vol(\unio(S\cup Q^*) \cap (D_i\times \RR)) ~=~
		\vol(\unio(S\cup S_0\cup Q^*_i) \cap (D_i\times \RR)).
	\end{equation}
	Indeed, for a point $q\in Q^*\setminus (S\cup S_0\cup Q^*_i)$,
	$\bbox(q)$ may contribute to the union 
	$\unio(S\cup P_\gamma\cup Q^*)$, but when projected
	onto the $xy$-plane it lies outside the domain $D_i$
	because the face $f(q,S\cup Q^*)$ lies outside $D_i$.

	Therefore we obtain
	\begin{align*}
	  \optcomp(S,D,\ell) ~&=~ \vol(\unio(S\cup Q^*)\cap (D\times \RR)) ~\le~ \sum_i \vol(\unio(S\cup Q^*)\cap (D_i\times \RR)),
	\end{align*}
	where we used $D=\bigcup_i D_i$. With equation \eqref{eq:2}, and then using that $Q_i^*$ is feasible for $\optcomp (S\cup S_0 ,D_i,\ell_i)$, we get
	\begin{align*}
	  \optcomp(S,D,\ell) ~&\le~ \sum_i \vol(\unio(S\cup S_0 \cup Q^*_i)\cap (D_i\times \RR)) \\
	  &\le~ \sum_i \optcomp (S\cup S_0 ,D_i,\ell_i) ~=~ \sum_{(S',D',\ell')\in \pi_\gamma } \optcomp (S',D',\ell').
	\end{align*}
	The statement now follows since $\pi_\gamma\in \Pi(S,D,\ell)$.
\end{proof}

Our dynamic programming algorithm closely follows the inequality
of Lemma~\ref{lem:recursioncompliant}.
Specifically, we define for each valid tuple $(S,D,\ell)$ the value
\[
	\dpcomp (S,D,\ell) ~=~ \begin{cases}
				\optcomp (S,D,\ell) &\text{if $\ell\le O(\sqrt{k})$;}\\
				{\displaystyle \max_{\pi \in \Pi (S,D,\ell)}~
					\sum_{(S',D',\ell')\in \pi} \dpcomp (S',D',\ell')}, 
						&\text{otherwise.}\\
				\end{cases}
\]

\begin{lemma}
\label{lem:relation}
	For each valid tuple $(S,D,\ell)$ we have
	\[ 
		\optcomp(S,D,\ell) ~\le~ \dpcomp(S,D,\ell) 
		~\le~ \optfree(S,D,\ell).
	\]
\end{lemma}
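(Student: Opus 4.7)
\medskip

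\noindent\textbf{Proof plan.}
The plan is to prove both inequalities simultaneously by induction on $\ell$. The key observation is that the quantities $\optcomp$, $\dpcomp$, and $\optfree$ satisfy recurrences that sandwich each other: Lemma~\ref{lem:recursioncompliant} bounds $\optcomp$ from above by a max-sum of $\optcomp$'s on the parts, Lemma~\ref{lem:recursionfree} bounds $\optfree$ from below by the analogous max-sum of $\optfree$'s, and by definition $\dpcomp$ \emph{equals} the max-sum of $\dpcomp$'s. Since every valid partition has $\ell_i \le 2\ell/3 < \ell$, this structure is tailor-made for induction on $\ell$.

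For the base case ($\ell \le O(\sqrt{k})$), the definition gives $\dpcomp(S,D,\ell)=\optcomp(S,D,\ell)$, and the inequality $\optcomp(S,D,\ell)\le \optfree(S,D,\ell)$ is the one already noted immediately after the definitions of $\optcomp$ and $\optfree$. So both required inequalities hold trivially in this case.

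For the inductive step ($\ell>O(\sqrt{k})$), assume the statement for all tuples with smaller third coordinate. The lower bound $\optcomp(S,D,\ell)\le \dpcomp(S,D,\ell)$ is obtained by chaining Lemma~\ref{lem:recursioncompliant} with the inductive hypothesis applied termwise to each $(S',D',\ell')$ appearing in the optimizing partition:
\[
\optcomp(S,D,\ell) \,\le\, \max_{\pi\in\Pi(S,D,\ell)} \sum_{(S',D',\ell')\in\pi} \optcomp(S',D',\ell') \,\le\, \max_{\pi\in\Pi(S,D,\ell)} \sum_{(S',D',\ell')\in\pi} \dpcomp(S',D',\ell') \,=\, \dpcomp(S,D,\ell).
\]
The upper bound $\dpcomp(S,D,\ell)\le \optfree(S,D,\ell)$ goes in the reverse direction, using the definition of $\dpcomp$, the inductive hypothesis termwise, and finally Lemma~\ref{lem:recursionfree}:
\[
\dpcomp(S,D,\ell) \,=\, \max_{\pi\in\Pi(S,D,\ell)} \sum_{(S',D',\ell')\in\pi} \dpcomp(S',D',\ell') \,\le\, \max_{\pi\in\Pi(S,D,\ell)} \sum_{(S',D',\ell')\in\pi} \optfree(S',D',\ell') \,\le\, \optfree(S,D,\ell).
\]

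There is essentially no obstacle here: the two previous lemmas were the substantive content, and the current lemma is the bookkeeping step that extracts a sandwich from them. The only thing to check is that the induction is well-founded, which follows from the partition constraint $\ell_i\le 2\ell/3$, guaranteeing strict decrease of the induction parameter whenever a nontrivial partition is used; in the base regime $\ell\le O(\sqrt{k})$ the recursion stops at $\optcomp$ directly.
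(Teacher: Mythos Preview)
Your proof is correct and follows essentially the same approach as the paper: induction on $\ell$, with the base case handled by the definition $\dpcomp=\optcomp$ together with the trivial inequality $\optcomp\le\optfree$, and the inductive step obtained by sandwiching the recursive definition of $\dpcomp$ between Lemma~\ref{lem:recursioncompliant} and Lemma~\ref{lem:recursionfree} via the induction hypothesis applied termwise. Your explicit remark on well-foundedness (via $\ell_i\le 2\ell/3$) is a welcome addition that the paper leaves implicit.
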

\begin{proof}
	We show this by induction on $\ell$.
	When $\ell\le O(\sqrt{k})$, then from the definitions 
	we have
	\[ 
		\dpcomp(S,D,\ell) ~=~ \optcomp(S,D,\ell) ~\le~ \optfree(S,D,\ell).
	\]
	This covers the base cases.
	For larger values of $\ell$, we use Lemma~\ref{lem:recursioncompliant},
	the induction hypothesis, and the definition of $\dpcomp(\cdot)$
	to derive
	\begin{align*}
		\optcomp(S,D,\ell)~&\le~ \max_{\pi \in \Pi (S,D,\ell)}~
			\sum_{(S',D',\ell')\in \pi} \optcomp (S',D',\ell') \\
		&\le~
		\max_{\pi \in \Pi (S,D,\ell)}~
			\sum_{(S',D',\ell')\in \pi} \dpcomp (S',D',\ell') \\
		&=~ \dpcomp(S,D,\ell).
	\end{align*}
	Also for larger values of $\ell$, we use 
	the definition of $\dpcomp(\cdot)$, the induction hypothesis, and 
	Lemma~\ref{lem:recursionfree},
	to derive
	\begin{align*}
		\dpcomp(S,D,\ell)~&=~ \max_{\pi \in \Pi (S,D,\ell)}~
			\sum_{(S',D',\ell')\in \pi} \dpcomp (S',D',\ell') \\
		&\le~
		\max_{\pi \in \Pi (S,D,\ell)}~
			\sum_{(S',D',\ell')\in \pi} \optfree (S',D',\ell') \\
		&\le~ \optfree(S,D,\ell). \tag*{\qedhere}
	\end{align*}
\end{proof}

Since we know that 
$\optfree(\emptyset,\sigma,k)=\optcomp(\emptyset,\sigma,k)$,
Lemma~\ref{lem:relation} implies that 
$\dpcomp(\emptyset,\sigma,k)=\optfree(\emptyset,\sigma,k)$.
Hence, it suffices to compute $\dpcomp(\emptyset,\sigma,k)$ using its recursive
definition.
In the remainder, we bound the running time of this algorithm.
	
\begin{theorem}
\label{thm:exact}
    In 3 dimensions, \volsel\ can be solved in time $n^{O(\sqrt{k})}$.
\end{theorem}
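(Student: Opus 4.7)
The plan is to compute $\dpcomp(\emptyset,\sigma,k)$ top-down with memoization using its recursive definition. By Lemma~\ref{lem:relation} combined with $\optfree(\emptyset,\sigma,k)=\optcomp(\emptyset,\sigma,k)$, this yields $\HSS(P,k)$, and the usual back-pointer bookkeeping recovers an optimal subset. So the whole theorem reduces to analyzing the running time of this recursion.

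The central structural claim I would prove first is that every valid tuple $(S,D,\ell)$ actually visited by the recursion satisfies $|S|=O(\sqrt{k})$, and therefore $|D|=O(|S|)=O(\sqrt{k})$ as well. The argument is an induction on recursion depth: valid partitions decrease $\ell$ by a factor at least $3/2$ and augment $S$ by at most $c\sqrt{|S|+\ell}$ points. Starting from $(\emptyset,\sigma,k)$, at depth $d$ we have $\ell\le (2/3)^d k$, so $|S|$ is dominated by the geometric sum $c\sum_{i\ge 0}\sqrt{(2/3)^i k}=O(\sqrt{k})$; in particular the recursion depth is $O(\log k)$. This bound is what converts the separator size $O(\sqrt{|S|+\ell})$ into a uniform $O(\sqrt{k})$ for the rest of the analysis.

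With this in hand, the running time has two factors. The number of distinct subproblems is bounded by choosing $S\subset P$ with $|S|=O(\sqrt{k})$ ($n^{O(\sqrt{k})}$ options), an $S$-compliant domain $D$ encoded as a subset of the $O(\sqrt{k})$ edges of $T(S)$ ($2^{O(\sqrt{k})}$ options), and $\ell\in\{0,\dots,k\}$, giving $n^{O(\sqrt{k})}$ in total. The work at each subproblem is also $n^{O(\sqrt{k})}$: for the base case $\ell=O(\sqrt{k})$ this is immediate from Lemma~\ref{lem:bottomrecursion}, and for the recursive case enumerating all valid partitions amounts to choosing $S_0\subset P\cap D$ with $|S_0|=O(\sqrt{k})$ ($n^{O(\sqrt{k})}$ options), cutting $D$ along a cycle in $T(S\cup S_0)$ ($2^{O(\sqrt{k})}$ options), and distributing the remaining budget among the $O(\sqrt{k})$ subdomains (polynomially many options), after which the child values are looked up from the table in constant time. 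Multiplying the two factors gives $n^{O(\sqrt{k})}$.

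The step I expect to be most delicate is the enumeration of valid partitions: one must enumerate a rich enough collection that the maximum in $\dpcomp$ meets the bound of Lemma~\ref{lem:recursioncompliant} (so the upper bound $\dpcomp\le\optfree$ from Lemma~\ref{lem:relation} is tight at $(\emptyset,\sigma,k)$), while producing a canonical encoding of each resulting $(S',D',\ell')$ so that memoization truly collapses the table to $n^{O(\sqrt{k})}$ entries. In practice this means iterating over all simple cycles of $T(S\cup S_0)$ with $O(\sqrt{k})$ vertices rather than only those returned by Miller's algorithm, verifying $(S\cup S_0)$-compliance of each resulting subdomain via Lemma~\ref{lem:compliant}, and treating degenerate partitions (e.g.\ subdomains containing no points of $P$) uniformly; none of these obstacles increases the asymptotic bound beyond $n^{O(\sqrt{k})}$.
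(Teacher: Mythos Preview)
Your proposal is correct and follows essentially the same route as the paper: compute $\dpcomp(\emptyset,\sigma,k)$ recursively, show every visited tuple has $|S|=O(\sqrt{k})$ via the geometric decay of $\ell$, and then count subproblems and per-subproblem work as $n^{O(\sqrt{k})}$ each. One small point worth tightening is that the increment to $|S|$ is $O(\sqrt{|S|+\ell})$ rather than $O(\sqrt{\ell})$; the paper handles this by splitting at the first index where $|S|$ exceeds $\ell$ and using the telescoping bound $|S_0|\le \ell_{\mathrm{parent}}-\ell_{\mathrm{child}}$ thereafter, which you should make explicit.
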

\begin{proof}
	We compute $\dpcomp(\emptyset,\sigma,k)$
	using its recursive definition.
	We need a bound on the number of different subproblems,
	defined by valid tuples $(S,D,\ell)$ that appear in the recursion.
	We will see that there are $n^{O(\sqrt{k})}$ different subproblems.
	
	Starting with $(S_1,D_1,\ell_1)=(\emptyset, \sigma,k)$,
	consider a sequence of valid tuples $(S_1,D_1,\ell_1)$, $(S_2,D_2,\ell_2)$, 
	$\dots$ such that, for $i\ge 2$, 
	the tuple $(S_{i},D_{i},\ell_{i})$ appears in some
	valid partition of $(S_{i-1},D_{i-1},\ell_{i-1})$.
	Because of the properties of valid partitions,
	we have $\ell_i\le 2\ell_{i-1}/3$
	and $|S_{i-1}|\le |S_i| \le |S_{i-1}| + O(\sqrt{|S_i|+\ell_{i-1}})$.
	
	Let $i_0$ be the first index $i$ with $|S_i| > \ell_i$. 
	Consider first the indices $i < i_0$, where $|S_i|\le \ell_i$.
	Then $|S_i| \le |S_{i-1}| + O(\sqrt{\ell_{i-1}})$
	and it follows by induction that 
	\begin{align*}
		|S_i| ~&\le~ |S_1| + O(\sqrt{\ell_{1}}) + O(\sqrt{\ell_{2}}) + \dots + O(\sqrt{\ell_{i-1}})\\
		&\le~ 0 + O\bigg( \sum_{j< i} \sqrt{\ell_j} \bigg)
		~\le~ O\Bigg( \sum_{j< i} \sqrt{\left( \frac{2}{3}\right)^j \ell_1} \Bigg)
		~\le~ O\Big( \sqrt{\ell_1} \Big)
		~\le~ O\Big( \sqrt{k} \Big),
	\end{align*}
	where we have used that $\ell_1=k$.
	By definition of $i_0$,
	for $i> i_0$ we have $|S_i|\le |S_{i_0}|+\ell_{i_0} \le 2 |S_{i_0}|= O(\sqrt{k})$.
	Therefore, for \emph{all} indices $i$ we have 
	$|S_i|=O(\sqrt{k})$.
	
	For each valid tuple that appears in the recursive computation
	of $\dpcomp(\emptyset,\sigma,k)$, there is some
	sequence of valid tuples, as considered before, that contains it.
	It follows that, for \emph{all} valid tuples 
	$(S,D,\ell)$ considered through the algorithm we have $|S|=O(\sqrt{k})$.
	
	Let us give an upper bound on the valid tuples $(S,D,\ell)$ 
	that appear in the computation.
	There are $n^{O(\sqrt{k})}$ choices for the set $S$.
	Once we have fixed $S$, the domain $D$ has to be $S$-compliant, and
	this means that we have to select edges in the triangulated graph $T(S)$.
	Since $T(S)$ has $O(|S|)=O(\sqrt{k})$ vertices and edges, there are at
	most $2^{|E(T(S))|}= 2^{O(\sqrt{k})}$ possible choices for $D$.
	Finally, we have $k$ options for the value $\ell$.
	Therefore, there are at most 
	\[
		n^{O(\sqrt{k})} \cdot 2^{O(\sqrt{k})} \cdot k ~=~ n^{O(\sqrt{k})} 
	\]
	valid tuples $(S,D,\ell)$ that appear in the recursion.
	
	We next bound how much time we spend for each tuple.
	Consider a valid tuple $(S,D,\ell)$ that appears through the recursion.
	If $\ell=O(\sqrt{k})$, we compute $\dpcomp(S,D,\ell)$ 
	using Lemma~\ref{lem:bottomrecursion} in $n^{O(\ell)}=n^{O(\sqrt{k})}$ time.
	Otherwise, to compute $\dpcomp(S,D,\ell)$ we have to iterate over all the valid
	partitions $\Pi(S,D,\ell)$. There are $n^{O(\sqrt{k})}$ 
	such valid partitions. Indeed, we have to select the subset $S_0\subset D\cap P$
	with $O(\sqrt{k})$ vertices and then the partitioning of $D$ 
	into regions $D_1,\dots,D_t$ that are $(S\cup S_0)$-compliant. 
	This can be bounded by $n^{O(\sqrt{k})}$.
	(Alternatively, we can iterate over the $n^{O(\sqrt{k})}$ 
	possible options to define the separating cycle~$\gamma$ used in the proof 
	of Lemma~\ref{lem:recursioncompliant}.)
	
	We conclude that in the computation of $\dpcomp(\emptyset,\sigma,k)$ 
	we have to consider $n^{O(\sqrt{k})}$ valid tuples and
	for each one of them computing $\dpcomp(\cdot)$ takes $n^{O(\sqrt{k})}$ time.
	The result follows.
\end{proof}

We only described an algorithm that computes $\HSS(P,k)$, i.e., the maximal volume realized by any size-$k$ subset of $P$. 
It is easy to augment the algorithm with appropriate bookkeeping to also compute an actual optimal subset.

\section{Efficient Polynomial-Time Approximation Scheme}
\label{sec:ptas}

In this section we design an approximation algorithm for \volsel. 

\begin{theorem}
  Given a point set $P$ of size $n$ in $\RR_{>0}^d$, $0 \le k \le n$, and $0<\eps \le 1/2$, we can compute a $(1 \pm \eps)$-approximation of $\HSS(P,k)$ in time $O(n \cdot \eps^{-d} (\log n + k + 2^{O(\eps^{-2} \log 1/\eps)^d}))$. We can also compute a set $S \subseteq P$ of size at most $k$ such that $\mu(S)$ is a $(1 - \eps)$-approximation of $\HSS(P,k)$ in the same time.
\end{theorem}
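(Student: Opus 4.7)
The approach is a standard rounding-plus-enumeration EPTAS. First, in time $O(nk)$ compute a constant-factor estimate $\tilde V$ of $V^* := \HSS(P,k)$; for instance the greedy algorithm of~\cite{nemhauser1978analysis} gives a value within a factor $e/(e-1) < 2$ of $V^*$, and even the trivial bound $\tilde V := k \cdot \max_{p \in P} \vol(\bbox(p))$ (which lies in $[V^*, k V^*]$) suffices for pruning. Then discard from $P$ every point $p$ with $\vol(\bbox(p)) \le (\eps/(2k))\tilde V$; collectively these can reduce the volume of any size-$k$ solution by at most $O(\eps)V^*$, an admissible loss.

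For the surviving points, round each coordinate $p_i$ down to the nearest element of a $(1+\eps)$-geometric grid $\{U_i (1+\eps)^{-j}\}$ in $[0,U_i]$, where $U_i := \max_{p\in P} p_i$. Two observations keep the number of levels per dimension bounded by $N = O(\eps^{-2}\log 1/\eps)$: (i)~the volume filter already removes points whose smallest coordinate is much below $(\eps/k)^{O(1)}$ times the maximum in that dimension, and (ii)~bucketing the remaining points by their ``dominant'' coordinate direction restricts the dynamic range in each dimension to $\mathrm{poly}(1/\eps)$. Rounding coordinates downwards only shrinks each box, so the rounded union volume lower-bounds the original; a telescoping argument across the $d$ coordinate rounding steps shows the loss per box is at most a factor $(1-\eps)$. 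Hence the rounded instance has at most $M := O((\eps^{-2}\log 1/\eps)^d)$ distinct points, and a $(1-\eps)$-approximation of the rounded problem yields a $(1-O(\eps))$-approximation of the original (rescaling $\eps$ by a constant absorbs this).

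Now solve the rounded instance exactly. Since points rounded to the same class contribute the same box, they are interchangeable, and multiple selections from one class add nothing to the union; hence it suffices to enumerate the $2^M = 2^{O((\eps^{-2}\log 1/\eps)^d)}$ sub-collections of rounded classes of size at most $k$ and compute each one's union volume as an instance of Klee's measure problem on anchored boxes in $O(M^{d/3}\polylog M)$ time. ``Unround'' the winning sub-collection by picking, for each selected class, any original point of $P$ that mapped to it, to obtain an $S \subseteq P$ of size at most $k$ with $\mu(S) \ge (1-\eps)V^*$. The leading $O(n \cdot \eps^{-d}(\log n + k))$ factor in the running time covers the volume filter, the grid assignment (by sorting coordinates and binary search), and the per-cell bookkeeping needed to retain enough candidate preimages for the lift.

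The main technical obstacle is controlling the grid size at $(\eps^{-2}\log 1/\eps)^d$ independent of $n$ and $k$, rather than the naive $(\log(U_i/\alpha_i))^d$ that depends on the spread of the data. This requires the two-fold filtering and dominant-direction bucketing indicated above, and a careful accounting to show that the small shape-parameters cut off by the filter indeed contribute $O(\eps V^*)$ in aggregate. A minor subtlety is that the unrounded selection must be a genuine subset, not multiset, of $P$; this is automatic once empty rounded classes are skipped, at no cost to the approximation.
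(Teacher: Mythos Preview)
Your central claim---that after volume filtering and ``dominant-direction bucketing'' the rounded instance has at most $M = O((\eps^{-2}\log 1/\eps)^d)$ distinct points---is false, and this breaks the whole scheme. Take the planar staircase $p_i = (2^i, 2^{-i})$ for $i=0,\ldots,m$: every box has volume~$1$, so the volume filter removes nothing; no point dominates another; and after rounding to a $(1+\eps)$-geometric grid the $x$-coordinates $2^0,2^1,\ldots,2^m$ land in $\Theta(m/\log(1/\eps))$ distinct buckets. Thus $M$ grows with the input, not just with $\eps$. Your assertion~(i), that the volume filter removes points whose smallest coordinate is far below the maximum in that dimension, is exactly what the staircase refutes: $p_m$ has $y$-coordinate $2^{-m}$ yet full volume. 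Assertion~(ii) is not defined precisely enough to evaluate, but no constant number of buckets can bound the coordinate spread on this example. In dimension $d\ge 3$ the same staircase (padded with a constant third coordinate) shows that the problem genuinely retains $\Theta(m)$ nontrivially interacting rounded points.

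The paper's argument is structurally different: it does \emph{not} reduce to a single bounded-size instance. Instead it lays down an exponential grid whose cells each span a factor $\lambda^{\tau-1}$ in every coordinate (so within a cell rounding does yield only $O(\eps^{-2}\log 1/\eps)^d$ points), proves via a shifting argument over the grid offset that deleting the points in the thick boundaries costs at most an $\eps$-fraction of the optimum, and then shows the key geometric lemma that unions of boxes from distinct cells overlap in at most an $\eps$-fraction of their total volume. This lets one solve each cell exactly by brute force and combine the cells by a knapsack-style dynamic program over the budget~$k$. The $O(nk)$ term in the running time comes from that DP, not from any global preprocessing. Your proposal is missing both the decomposition into near-independent pieces and the mechanism (the DP) for distributing $k$ among them.
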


We also discuss an improvement to time $O\big( 2^{O(\eps^{-2} \log 1/\eps)^d} \cdot n \log n \big)$ in Section~\ref{sec:improvementtim}.

The approach is based on the shifting technique of Hochbaum and Maass~\cite{HochbaumM85}.
However, there are some non-standard aspects in our application. 
It is impossible to break the problem into independent subproblems because all
the anchored boxes intersect around the origin. We instead break
the input into subproblems that are \emph{almost} independent.
To achieve this, we use an exponential grid, instead of the usual regular
grid with equal-size cells. Alternatively, this could be interpreted
as using a regular grid in a $\log$-$\log$ plot of the input points.

Throughout this section we need two numbers $\lambda,\tau \approx d/\eps$. Specifically, we define $\tau$ as the smallest integer larger than $d/\eps$, and $\lambda$ as the smallest power of $(1-\eps)^{-1/d}$ larger than $d/\eps$.
We consider a partitioning of the positive quadrant $\RR_{>0}^d$ into \DEF{regions} of the form
\[ R(\bx) := \prod_{i=1}^d [\lambda^{x_i},\lambda^{x_i+1}) \quad \text{for} \quad \bx = (x_1,\ldots,x_d) \in \ZZ^d. \]
On top of this partitioning we consider a grid, where each grid cell contains $(\tau-1)^d$ regions and the grid boundaries are thick, i.e., two grid cells do not touch but have a region in between. More precisely, for any offset $\bl = (\ell_1,\ldots,\ell_d) \in \ZZ^d$, we define the grid \DEF{cells} 
\[ C_\bl(\by) := \prod_{i=1}^d [\lambda^{\tau \cdot y_i + \ell_i + 1},\lambda^{\tau(y_i+1) + \ell_i}) \quad \text{for} \quad \by = (y_1,\ldots,y_d) \in \ZZ^d. \]
Note that each grid cell indeed consists of $(\tau-1)^d$ regions, and the space not contained in any grid cell (i.e., the grid boundaries) consists of all regions $R(\bx)$ with $x_i \equiv \ell_i \pmod \tau$ for \emph{some} $1 \le i \le d$.

Our approximation algorithm now works as follows (cf.\ the pseudocode given below). 

(1) Iterate over all grid offsets $\bl \in [\tau]^d$. This is the key step of the shifting technique of Hochbaum and Maass~\cite{HochbaumM85}.

(2) For any choice of the offset $\bl$, remove all points not contained in any grid cell, i.e., remove points contained in the thick grid boundaries. This yields a set $P' \subseteq P$ of remaining points. 

(3) The grid cells now induce a partitioning of $P'$ into sets $P_1',\ldots,P_m'$, where each $P_i'$ is the intersection of $P'$ with a grid cell $C_i$ (with $C_i = C_\bl(\by^{(i)})$ for some $\by^{(i)} \in \ZZ^d$).
Note that these grid cell subproblems $P_1',\ldots,P_m'$ are not independent, since any two boxes have a common intersection near the origin, no matter how different their coordinates are. However, we will see that we may treat $P_1',\ldots,P_m'$ as independent subproblems since we only want an approximation. 

(4) We discretize by rounding down all coordinates of all points in $P_1',\ldots,P_m'$ to powers of\footnote{Here we use that $\lambda$ is a power of $(1-\eps)^{-1/d}$, to ensure that rounded points are contained in the same cells as their originals.} $(1-\eps)^{1/d}$. We can remove duplicate points that are rounded to the same coordinates. This yields sets $\tilde P_1,\ldots,\tilde P_m$. Note that within each grid cell in any dimension the largest and smallest coordinate differ by a factor of at most $\lambda^{\tau-1}$. Hence, there are at most $\log_{(1-\eps)^{-1/d}}(\lambda^{\tau-1}) = O(\eps^{-2} \log 1/\eps)$ different rounded coordinates in each dimension, and thus the total number of points in each $\tilde P_i$ is $O(\eps^{-2} \log 1/\eps)^d$. 

(5) Since there are only few points in each $\tilde P_i$, we can precompute all \volsel\ solutions on each set $\tilde P_i$, i.e., for any $1 \le i \le m$ and any $0 \le k' \le |\tilde P_i|$ we precompute $\HSS(\tilde P_i,k')$. We do so by exhaustively enumerating all $2^{|\tilde P_i|}$ subsets $S$ of $\tilde P_i$, and for each one computing $\mu(S)$ by inclusion-exclusion in time $O(2^{|S|})$ (see, e.g.,~\cite{while2006faster,wu2001metrics}). This runs in total time $O(m \cdot 2^{O(\eps^{-2} \log 1/\eps)^d}) = O(n \cdot 2^{O(\eps^{-2} \log 1/\eps)^d})$.

(6) It remains to split the $k$ points that we want to choose over the subproblems $\tilde P_1,\ldots,\tilde P_m$. As we treat these subproblems independently, we compute 
\[ V(\bl) := \max_{k_1+\ldots+k_m \le k} \; \sum_{i=1}^m \HSS(\tilde P_i,k_i). \]
Note that if the subproblems would be independent, then this expression would yield the exact result. We argue below that the subproblems are sufficiently close to being independent that this expression yields a $(1-\eps)$-approximation of $\HSS(\bigcup_{i=1}^m \tilde P_i,k)$.
Observe that the expression $V(\bl)$ can be computed efficiently by dynamic programming, where we compute for each $i$ and $k'$ the following value:
\[ T[i,k'] = \max_{k_1+\ldots+k_i \le k'}\; \sum_{i'=1}^i \HSS(\tilde P_{i'},k_{i'}). \]
The following rule computes this table (see the pseudocode below for further details):
\[ T[i,k'] = \max_{0 \le \kappa \le \min\{k',|\tilde P_i|\}} \big( \HSS(\tilde P_i,\kappa) + T[i-1,k'-\kappa] \big). \]

(7) Finally, we optimize over the offset $\bl$ by returning the maximal $V(\bl)$.

\medskip

This finishes the description of the approximation algorithm. In pseudocode, this yields the following procedure.
\begin{enumerate}[label=(\arabic*)]
  \item Iterate over all offsets $\bl = (\ell_1,\ldots,\ell_d) \in [\tau]^d$:
  \begin{enumerate}[label=(\arabic*)] \setcounter{enumii}{1}
    \item $P' := P$. Delete any $p$ from $P'$ that is not contained in any grid cell $C_\bl(\by)$.
    \item Partition $P'$ into $P_1',\ldots,P_m'$, where $P_i' = P' \cap C_i$ for some grid cell $C_i$.
    \item Round down all coordinates to powers of $(1-\eps)^{1/d}$ and remove duplicates, obtaining $\tilde P_1,\ldots,\tilde P_m$.
    \item Compute $H[i,k'] := \HSS(\tilde P_i,k')$ for all $1 \le i \le m$, $0 \le k' \le |\tilde P_i|$.
    \item Compute $V(\bl) := \max_{k_1+\ldots+k_m \le k} \sum_{i=1}^m \HSS(\tilde P_i,k_i)$ by dynamic programming:
    \begin{enumerate}[label=--] 
      \item Initialize $T[i,k'] = 0$ for all $0 \le i \le m$, $0 \le k' \le k$.
      \item For $i=1,\ldots,m$, for $\kappa=0,\ldots,|\tilde P_i|$, and for $k'=\kappa,\kappa+1,\ldots,k$:
      \begin{enumerate}[label=--] 
        \item Set $T[i,k'] := \max\{T[i,k'], H[i,\kappa] + T[i-1,k'-\kappa]\}$
      \end{enumerate}
      \item Set $V(\bl) := T[m,k]$.
    \end{enumerate}
  \end{enumerate} \setcounter{enumi}{6}
  \item Return $\max_\bl V(\bl)$.
\end{enumerate}

\subsection{Running Time}
Step (1) yields a factor $\tau^d = O(\frac 1\eps)^d$ in the running time. Since we can compute for each point in constant time the grid cell it is contained in, step (2) runs in time $O(n)$. For the partitioning in step (3), we use a dictionary data structure storing all $\by \in \ZZ^d$ with nonempty $P' \cap C_\bl(\by)$. Then we can assign any point $p \in P'$ to the other points in its cell by one lookup in the dictionary, in time $O(\log n)$. Thus, step (3) can be performed in time $O(n \log n)$. Step (4) immediately works in the same running time. For step (5) we already argued above that it can be performed in time $O\big(n 2^{O(\eps^{-2} \log 1/\eps)^d}\big)$. Finally, from the pseudocode for step (6) we read off a running time of $O(\sum_{i=1}^m |\tilde P_i| \cdot k) = O(n k)$. The total running time is thus
\[ O\Big(n \cdot \eps^{-d} \big(\log n + k + 2^{O(\eps^{-2} \log 1/\eps)^d}\big)\Big). \]

\subsection{Correctness}
The following lemmas show that the above algorithm indeed computes a $(1\pm O(\eps))$-approximation of $\HSS(P)$. Reducing $\eps$ by an appropriate constant factor then yields a $(1\pm\eps)$-approximation.

\begin{lemma}[Removing grid boundaries] \label{lem:removegridboundary}
  Let $P$ be a point set and let $0 \le k \le |P|$. Remove all points contained in grid boundaries with offset $\bl$ to obtain the point set $P_\bl := P \cap \bigcup_{\by \in \ZZ^d} C_\bl(\by)$. 
  Then for \emph{all} $\bl \in \ZZ^d$ we have
  \[ \HSS(P_\bl,k) \le \HSS(P,k), \]
  and for \emph{some} $\bl \in \ZZ^d$ we have
  \[ \HSS(P_\bl,k) \ge (1-\eps) \HSS(P,k). \]
\end{lemma}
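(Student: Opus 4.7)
The first inequality $\HSS(P_\bl,k)\le \HSS(P,k)$ is immediate: since $P_\bl\subseteq P$, every feasible $k$-subset of $P_\bl$ is also a feasible $k$-subset of $P$. For the second inequality, the plan is the classical shifting-technique averaging. Fix an optimal $S^*\subseteq P$ with $|S^*|\le k$ and $\mu(S^*)=\HSS(P,k)$, and for each $\bl\in[\tau]^d$ let $S^*_\bl:=S^*\cap P_\bl$ be the non-boundary portion of the optimum. Then $S^*_\bl\subseteq P_\bl$ and $|S^*_\bl|\le k$, so $\HSS(P_\bl,k)\ge \mu(S^*_\bl)$, and it suffices to exhibit one offset $\bl\in[\tau]^d\subseteq\ZZ^d$ with $\mu(S^*_\bl)\ge(1-\eps)\mu(S^*)$. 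I will obtain such a $\bl$ by showing that the average of $\mu(S^*_\bl)$ over $\bl\in[\tau]^d$ is at least $(1-\eps)\mu(S^*)$.

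To execute the averaging, I would rewrite the pointwise loss as
\[
  \mu(S^*)-\mu(S^*_\bl) ~=~ \vol\!\big(\unio(S^*)\setminus \unio(S^*_\bl)\big) ~=~ \int_{\RR^d_{>0}} \mathbf{1}\!\left[q\in \unio(S^*)\setminus \unio(S^*_\bl)\right] dq.
\]
The key observation is that for any $q$ dominated by some point of $S^*$, the indicator equals $1$ iff \emph{every} $p\in S^*$ dominating $q$ lies in the grid boundary for offset $\bl$, since otherwise a non-boundary dominator of $q$ remains in $S^*_\bl$ and still covers $q$. Picking any one $p^*\in S^*$ that dominates $q$ (which exists because $q\in\unio(S^*)$), the indicator is therefore upper bounded by $\mathbf{1}[p^*\text{ lies in the boundary for }\bl]$. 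Now for any fixed $p\in R(\bx^p)$, $p$ is on the boundary of an offset $\bl\in[\tau]^d$ exactly when $\ell_i\equiv x^p_i\pmod\tau$ for some $i$; under uniform $\bl\in[\tau]^d$ this event has probability $1-(1-1/\tau)^d$, which is at most $d/\tau<\eps$ by Bernoulli's inequality and the choice $\tau>d/\eps$. Applying Fubini to exchange the sum over offsets and the integral over $q$ gives
\[
  \frac{1}{\tau^d}\sum_{\bl\in[\tau]^d}\!\big(\mu(S^*)-\mu(S^*_\bl)\big) ~\le~ \eps\cdot\vol(\unio(S^*)) ~=~ \eps\,\mu(S^*),
\]
so some $\bl$ satisfies $\mu(S^*_\bl)\ge(1-\eps)\mu(S^*)$, completing the argument.

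The subtlety to watch out for is that deleting a boundary point from $S^*$ can in principle erase volume arbitrarily far from the grid boundary, since every anchored box extends all the way to the origin; thus one cannot simply bound the loss pointwise by the volume of the boundary strip, which is why the argument cannot be localized and needs the random-offset averaging. What makes the averaging work is the asymmetry that the contribution to the loss at any fixed $q$ requires some specific dominator of $q$ to be unlucky, while an individual point is unlucky with probability only $<\eps$ over the uniform offset.
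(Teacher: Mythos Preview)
Your proof is correct and follows essentially the same approach as the paper: fix an optimal solution, average over uniformly random offsets in $[\tau]^d$, for each point $q\in\unio(S^*)$ pick a single dominator and bound the probability that this dominator falls into the boundary by $d/\tau<\eps$, then integrate (Fubini) and conclude by averaging. The only cosmetic differences are that the paper uses a direct union bound where you compute $1-(1-1/\tau)^d$ and apply Bernoulli, and the paper phrases the argument in terms of $\Ex_\bl[\mu(S_\bl)]$ rather than the averaged loss; both are equivalent.
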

\begin{proof}
  Since we only remove points, the first inequality is immediate. 
  For the second inequality we use a probabilistic argument. 
  Consider an optimal solution, i.e., a set $S \subseteq P$ of size at most $k$ with $\mu(S) = \HSS(P,k)$. Let $S_\bl := S \cap P_\bl$. 
  For a uniformly random offset $\bl \in [\tau]^d$, consider the probability that a fixed point $p \in S$ survives, i.e., we have $p \in S_\bl$. Consider the region $R(\bx) = \prod_{i=1}^d [\lambda^{x_i},\lambda^{x_i+1})$ containing point $p$, where $\bx = (x_1,\ldots,x_d) \in \ZZ^d$. Recall that the grid boundaries consist of all regions $R(\bx)$ with $x_i \equiv \ell_i \pmod \tau$ for some $1 \le i \le d$. For a uniformly random~$\bl$, for fixed $i$ the equation $x_i \equiv \ell_i \pmod \tau$ holds with probability $1/\tau$. By a union bound, the probability that at least one of these equations holds for $1 \le i \le d$ is at most $d/\tau \le \eps$ (by definition of $\tau$ as the smallest integer larger than $d/\eps$). Hence, $p$ survives with probability at least $1-\eps$. 
  
  Now for each point $q \in \unio(S)$ identify a point $s(q) \in S$ dominating $q$. Since $s(q)$ survives in $S_\bl$ with probability at least $1-\eps$, the point $q$ is dominated by $S_\bl$ with probability at least $1-\eps$. 
  By integrating over all $q \in \unio(S)$ we thus obtain an expected volume of
  \[ \Ex_\bl[\mu(S_\bl)] = \int_{\unio(S)} \Pr[q \text{ is dominated by } S_\bl] d q \ge \int_{\unio(S)} (1-\eps) d q = (1-\eps) \mu(S). \]
  It follows that for some $\bl$ we have $\mu(S_\bl) \ge \Ex[\mu(S_\bl)] \ge (1-\eps) \mu(S)$. 
  For this $\bl$ we have
  \[ \HSS(P_\bl,k) \ge \mu(S_\bl) \ge (1-\eps) \mu(S) = (1-\eps) \HSS(P,k), \]
  where the first inequality uses $|S_\bl| \le k$ and the definition of $\HSS$ as maximizing over all subsets, and the last inequality holds since we picked $S$ as an optimal solution, realizing $\HSS(P,k)$.
\end{proof}

\begin{lemma}[Rounding down coordinates] \label{lem:rounding}
  Let $P$ be a point set, and let $\tilde P$ be the same point set after rounding down all coordinates to powers of $(1-\eps)^{-1/d}$. Then for any $k$
  \[ (1-\eps) \HSS(P,k) \le \HSS(\tilde P,k) \le \HSS(P,k). \]
\end{lemma}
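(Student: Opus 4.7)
The upper bound is the easy direction: rounding down each coordinate only shrinks the anchored boxes, so $\unio(\tilde S) \subseteq \unio(S)$ whenever $\tilde S$ is the rounded image of $S$. My plan is to take an optimal $\tilde S \subseteq \tilde P$ of size at most $k$ realizing $\HSS(\tilde P, k)$, pick an arbitrary preimage in $P$ for each point of $\tilde S$, obtaining $S \subseteq P$ with $|S| \le k$, and conclude $\HSS(\tilde P,k) = \mu(\tilde S) \le \mu(S) \le \HSS(P,k)$. The only mild subtlety is that the rounding step collapses duplicates, but that only shrinks $|\tilde S|$ relative to its preimages, so $|S| \le |\tilde S| \le k$ holds.

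For the lower bound, the key is a uniform scaling argument. Let $S \subseteq P$ be optimal, realizing $\HSS(P,k)$, and let $\tilde S \subseteq \tilde P$ be obtained from $S$ by rounding each coordinate down to a power of $(1-\eps)^{1/d}$ and discarding duplicates. Write $\lambda := (1-\eps)^{-1/d}$. Because rounding is down-rounding to the nearest power of $\lambda$, every point $p \in S$ and its rounded image $\tilde p$ satisfy $\tilde p_i \le p_i \le \lambda \cdot \tilde p_i$ coordinatewise. Hence the scaled set $\{\lambda \cdot \tilde p : \tilde p \in \tilde S\}$ dominates $S$ coordinatewise. Applying the uniform dilation $x \mapsto \lambda x$ of $\RR^d$ to the union, this gives $\lambda \cdot \unio(\tilde S) \supseteq \unio(S)$, and since this dilation multiplies volume by $\lambda^d$, we obtain
\[
\mu(\tilde S) = \lambda^{-d} \vol(\lambda \cdot \unio(\tilde S)) \ge \lambda^{-d} \mu(S) = (1-\eps)\,\mu(S) = (1-\eps)\,\HSS(P,k).
\]
Combined with $|\tilde S| \le |S| \le k$, this yields $\HSS(\tilde P,k) \ge \mu(\tilde S) \ge (1-\eps)\,\HSS(P,k)$.

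I do not anticipate any real obstacle here; the only thing worth double-checking is that the scaling step is legal, i.e., that $\lambda \cdot \unio(\tilde S) = \unio(\lambda \cdot \tilde S)$ as subsets of $\RR^d$, which is immediate since anchored boxes are preserved by coordinatewise dilation around the origin. The argument is then completely algebraic and does not depend on any properties of the grid or cells used elsewhere in the proof of the EPTAS.
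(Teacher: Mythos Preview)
Your proof is correct and is essentially the same argument as the paper's. The paper phrases it by introducing the auxiliary scaled set $\hat P := (1-\eps)^{1/d}\cdot P$ and sandwiching $\hat p \le \tilde p \le p$ coordinatewise, whereas you equivalently observe $p \le \lambda\,\tilde p$ and apply the dilation $x\mapsto \lambda x$ directly to $\unio(\tilde S)$; both routes use the same scaling-of-volume step $\lambda^d = (1-\eps)^{-1}$.
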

\begin{proof}
  Let $\hat P$ be the set $P$ with all coordinates scaled down by a factor $\alpha := (1-\eps)^{1/d}$. By a simple scaling invariance, we have $\HSS(\hat P,k) = \alpha^d \cdot \HSS(P,k) = (1-\eps) \HSS(P,k)$. Note that for any point $\tilde p \in \tilde P$ the corresponding point $p \in P$ dominates~$\tilde p$, and the corresponding point $\hat p \in \hat P$ is dominated by $\tilde p$.
  Now pick any subset $\tilde S$ of $\tilde P$ of size~$k$, and let $S,\hat S$ be the corresponding subsets of $P,\hat P$. Then we have $\unio(\hat S) \subseteq \unio(\tilde S) \subseteq \unio(S)$,
  which implies $\mu(\hat S) \le \mu(\tilde S) \le \mu(S)$, and thus 
  \[ (1-\eps) \HSS(P,k) = \HSS(\hat P,k) \le \HSS(\tilde P,k) \le \HSS(P,k). \qedhere \]
\end{proof}

In the proof of the next lemma it becomes important that we have used the thick grid boundaries, with a separating region, when defining the grid cells.

\begin{lemma}[Treating subproblems as independent I] \label{lem:independentsubproblemsI}
  For any offset $\bl$, let $S_1,\ldots,S_m$ be point sets contained in different grid cells with respect to offset $\bl$. Then we have
  \[ (1-\eps) \sum_{i=1}^m \mu(S_i) \le \mu\Big( \bigcup_{i=1}^m S_i \Big) \le \sum_{i=1}^m \mu(S_i). \]
\end{lemma}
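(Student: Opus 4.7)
The upper bound is subadditivity of volume. For the lower bound, write $U_i := \unio(S_i)$ so that $\mu(S_i) = \vol(U_i)$ and $\mu(\bigcup_i S_i) = \vol(\bigcup_i U_i)$; for $x \in \RR_{>0}^d$, let $n(x)$ be the number of indices $i$ with $x \in U_i$. Then
\[ \sum_{i=1}^m \mu(S_i) \;-\; \mu\Big(\bigcup_{i=1}^m S_i\Big) ~=~ \int_{\bigcup_i U_i} (n(x) - 1) \, dx, \]
and the goal is to bound this integral by $\eps \sum_i \vol(U_i)$ via a charging argument.

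\textbf{Low parts.} Let $\by^{(i)} \in \ZZ^d$ denote the grid index of the cell $C_\bl(\by^{(i)})$ containing $S_i$, with components $y^{(i)}_k$. I would define the \emph{low part}
\[ T_i ~:=~ U_i \cap \bigcup_{k=1}^d \big\{ x \in \RR_{>0}^d \,:\, x_k \le \lambda^{\tau y^{(i)}_k + \ell_k} \big\} ~\subseteq~ U_i \]
and establish two properties:
\textbf{(a)} the sets $U_i \setminus T_i$ are pairwise disjoint (so at most one $i$ has $x \in U_i \setminus T_i$); and
\textbf{(b)} $\vol(T_i) \le (d/\lambda)\,\vol(U_i) < \eps\,\vol(U_i)$ for every $i$.

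\textbf{Proving (a) and (b).} Any $x \in U_i$ is dominated by some $p \in S_i \subseteq C_\bl(\by^{(i)})$, hence $x_k \le p_k < \lambda^{\tau (y^{(i)}_k + 1) + \ell_k}$ for every $k$. If additionally $x \in U_i \setminus T_i$, then $x_k > \lambda^{\tau y^{(i)}_k + \ell_k}$ for every $k$, so $U_i \setminus T_i$ lies in the open box $\prod_k (\lambda^{\tau y^{(i)}_k + \ell_k}, \lambda^{\tau(y^{(i)}_k + 1) + \ell_k})$, and these open boxes for distinct $\by^{(i)}$ are pairwise disjoint, proving (a). For (b), every $p \in S_i$ satisfies $p_k \ge \lambda^{\tau y^{(i)}_k + \ell_k + 1} = \lambda \cdot t$ with $t := \lambda^{\tau y^{(i)}_k + \ell_k}$ (the extra factor $\lambda$ is exactly what the thick grid boundaries buy us). Since every anchored box extends down to $0$ in the $k$-th coordinate and $p_k \ge \lambda t$ for all $p \in S_i$, a short Fubini calculation yields $\vol(U_i \cap \{x_k \le t\}) = t \cdot V_k$ and $\vol(U_i) \ge \lambda t \cdot V_k$, where $V_k$ is the $(d-1)$-dimensional volume of the projection of $U_i$ onto the coordinates other than $k$. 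Hence $\vol(U_i \cap \{x_k \le t\}) \le \vol(U_i)/\lambda$, and a union bound over the $d$ slabs gives (b), using $\lambda > d/\eps$.

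\textbf{Conclusion and main obstacle.} By (a), for every $x$ with $n(x) \ge 1$ at most one $i$ satisfies $x \in U_i \setminus T_i$, so $n(x) - 1 \le |\{ i : x \in T_i\}|$. Integrating and applying (b):
\[ \sum_i \vol(U_i) - \vol\Big(\bigcup_i U_i\Big) ~\le~ \sum_i \vol(T_i) ~<~ \eps \sum_i \vol(U_i), \]
which rearranges to the lower bound. I expect the main obstacle to be spotting the correct definition of $T_i$ and recognizing that $U_i \setminus T_i$ is trapped in a uniquely determined open box indexed by $\by^{(i)}$; everything that follows is a direct Fubini/union-bound computation exploiting the factor-$\lambda$ multiplicative gap between slab thresholds and cell coordinates.
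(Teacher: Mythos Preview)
Your proof is correct and follows essentially the same approach as the paper: both bound the ``low part'' $T_i = \unio(S_i) \cap \bigcup_k \{x_k \le \lambda^{\tau y^{(i)}_k + \ell_k}\}$ by $(d/\lambda)\,\mu(S_i)$ via the identical Fubini argument exploiting the factor-$\lambda$ gap from the thick grid boundaries. The only cosmetic difference is bookkeeping: the paper orders the cells by $\sum_k y^{(i)}_k$ and uses the telescoping identity $\mu(\bigcup_i S_i) = \sum_i \big(\mu(S_i) - \vol(\unio(S_i) \cap \bigcup_{j<i} \unio(S_j))\big)$, whereas you avoid the ordering with the symmetric pointwise bound $n(x)-1 \le |\{i : x \in T_i\}|$.
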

\begin{proof}
  The second inequality is essentially the union bound. Specifically, for any sets $X_1,\ldots,X_m$ the volume of $\bigcup_{i=1}^m X_i$ is at most the sum over all volumes of $X_i$ for $1 \le i \le m$. In particular, this statement holds with $X_i = \unio(S_i)$, which yields the second inequality.
  
  For the first inequality, observe that we obtain the total volume of all points dominated by $S_1 \cup \ldots \cup S_m$ by summing up the volume of all points dominated by $S_i$ but not by any $S_j$, $j < i$, for each $1 \le i \le m$, i.e., we have
  \begin{equation} \label{eq:claimunioone}
   \mu\Big( \bigcup_{i=1}^m S_i \Big) = \sum_{i=1}^m \bigg( \mu(S_i) - \vol\Big( \unio(S_i) \cap \bigcup_{j<i} \unio(S_j) \Big) \bigg). 
  \end{equation}
  
  Now let $C_\bl(\by^{(i)})$ be the grid cell containing $P_i$ for $1 \le i \le m$, where $\by^{(i)} = (y^{(i)}_1,\ldots,y^{(i)}_d) \in \ZZ^d$. We may assume that these cells are ordered in non-decreasing order of $y^{(i)}_1 + \ldots + y^{(i)}_d$. Observe that in this ordering, for any $j<i$ we have $y^{(j)}_t < y^{(i)}_t$ for \emph{some} $1 \le t \le d$. Recall that $C_\bl(\by) = \prod_{t=1}^d [\lambda^{\tau \cdot y_t + \ell_t + 1},\lambda^{\tau(y_t+1) + \ell_t})$. It follows that each point in $\bigcup_{j<i} \unio(S_j)$ has $t$-th coordinate at most $\delta_t := \lambda^{\tau \cdot y_t + \ell_t}$ for \emph{some} $1 \le t \le d$. Setting $D_t := \{ (z_1,\ldots,z_d) \in \RR_{\ge 0}^d \mid z_t \le \delta_t \}$, we thus have $\bigcup_{j<i} \unio(S_j) \subseteq \bigcup_{t=1}^d D_t$, which yields
  \begin{equation} \label{eq:claimunio}
   \vol\Big( \unio(S_i) \cap \bigcup_{j<i} \unio(S_j) \Big) \le \vol\Big( \unio(S_i) \cap \bigcup_{t=1}^d D_t \Big) \le \sum_{t=1}^d \vol\big( \unio(S_i) \cap D_t \big). 
  \end{equation}
  Let $A$ be the $(d-1)$-dimensional volume of the intersection of $\unio(S_i)$ with the plane $x_t = 0$. 
  Since all points in $S_i$ have $t$-th coordinate at least $\lambda^{\tau \cdot y_t + \ell_t + 1} = \lambda \cdot \delta_t$, we have $\mu(S_i) \ge A \cdot \lambda \cdot \delta_t$. Moreover, $\unio(S_i) \cap D_t$ has $d$-dimensional volume $A \cdot \delta_t$. Together, this yields $\vol(\unio(S_i) \cap D_t) \le \mu(S_i) / \lambda$. With (\ref{eq:claimunioone}) and (\ref{eq:claimunio}), we thus obtain
  \[ \mu\Big( \bigcup_{i=1}^m S_i \Big) \ge \sum_{i=1}^m \big( \mu(S_i) - d \cdot \mu(S_i) / \lambda \big) \ge (1-\eps) \sum_{i=1}^m \mu(S_i), \]
  since $\lambda \ge d/\eps$.
\end{proof}

\begin{lemma}[Treating subproblems as independent II] \label{lem:independentsubproblemsII}
  For any offset $\bl$, let $P_1,\ldots,P_m$ be point sets contained in different grid cells, and $k \ge 0$. Set $P := \bigcup_{i=1}^m P_i$. Then we have
  $$ (1-\eps) \cdot \max_{k_1+\ldots+k_m \le k} \sum_{i=1}^m \HSS(P_i,k_i) \le \HSS(P,k) \le \max_{k_1+\ldots+k_m \le k} \sum_{i=1}^m \HSS(P_i,k_i). $$
\end{lemma}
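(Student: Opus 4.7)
The plan is to derive both inequalities directly from Lemma~\ref{lem:independentsubproblemsI} by choosing the right subsets. Since the statement concerns the two directions separately, I would prove each one by exhibiting a single witness and invoking the appropriate inequality from Lemma~\ref{lem:independentsubproblemsI}.

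For the upper bound, I would start with an optimal solution $S \subseteq P$ with $|S| \le k$ and $\mu(S) = \HSS(P,k)$. Because the grid cells partition the portion of space containing the $P_i$'s, the sets $S_i := S \cap P_i$ are disjoint, lie in distinct grid cells, and satisfy $S = \bigcup_{i=1}^m S_i$ and $\sum_i |S_i| \le k$. Setting $k_i := |S_i|$, the second inequality of Lemma~\ref{lem:independentsubproblemsI} gives
\[
\HSS(P,k) \;=\; \mu(S) \;=\; \mu\Big(\bigcup_{i=1}^m S_i\Big) \;\le\; \sum_{i=1}^m \mu(S_i) \;\le\; \sum_{i=1}^m \HSS(P_i,k_i),
\]
where the last inequality uses that $S_i \subseteq P_i$ with $|S_i| = k_i$ is a feasible (not necessarily optimal) candidate for $\HSS(P_i,k_i)$. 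Since $(k_1,\ldots,k_m)$ satisfies $\sum_i k_i \le k$, this is bounded above by the maximum over all such tuples.

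For the lower bound, I would go the other way. Fix any tuple $(k_1,\ldots,k_m)$ with $\sum_i k_i \le k$ achieving the maximum on the right, and for each $i$ pick an optimal subset $S_i^* \subseteq P_i$ of size $k_i$ with $\mu(S_i^*) = \HSS(P_i,k_i)$. Let $S^* := \bigcup_{i=1}^m S_i^*$. Since the $P_i$ are in distinct grid cells, so are the $S_i^*$, and $|S^*| = \sum_i k_i \le k$, so $S^*$ is feasible for $\HSS(P,k)$. Applying the first inequality of Lemma~\ref{lem:independentsubproblemsI},
\[
\HSS(P,k) \;\ge\; \mu(S^*) \;=\; \mu\Big(\bigcup_{i=1}^m S_i^*\Big) \;\ge\; (1-\eps)\sum_{i=1}^m \mu(S_i^*) \;=\; (1-\eps) \sum_{i=1}^m \HSS(P_i,k_i),
\]
which is exactly the claimed bound for this tuple, and hence also for the maximizing tuple.

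There is really no main obstacle here; the real content has already been done in Lemma~\ref{lem:independentsubproblemsI}, and the only thing one has to check is that when we split an optimal global solution along the cells, respectively glue together optimal local solutions, both operations preserve the cardinality budget and the property of lying in distinct grid cells. Both are immediate from the fact that the $P_i$ are contained in distinct grid cells, so the partitioning of any $S \subseteq P$ into $S \cap P_i$ respects that structure.
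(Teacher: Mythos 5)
Your proof is correct and mirrors the paper's argument exactly: split an optimal global solution cell-by-cell and apply the union-bound direction of Lemma~\ref{lem:independentsubproblemsI} for the upper bound, then glue together per-cell optimal solutions and apply the $(1-\eps)$ direction for the lower bound. No meaningful difference in approach.
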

\begin{proof}
  Consider an optimal solution $S$ of $\HSS(P,k)$ and let $S_i := S \cap P_i$ for $1 \le i \le m$. Then by choice of $S$ as an optimal solution, and by Lemma~\ref{lem:independentsubproblemsI}, we have
  \[ \HSS(P,k) = \mu(S) = \mu\Big( \bigcup_{i=1}^m S_i \Big) \le \sum_{i=1}^m \mu(S_i). \]
  Since \HSS\ maximizes over all subsets and $\sum_{i=1}^m |S_i| = |S| \le k$, we further obtain
  \[ \sum_{i=1}^m \mu(S_i) \le \sum_{i=1}^m \HSS(P_i,|S_i|) \le \max_{k_1+\ldots+k_m \le k} \sum_{i=1}^m \HSS(P_i,k_i). \]
  This shows the second inequality. 
  
  For the first inequality, we pick sets $S_1,\ldots,S_m$, where $S_i \subseteq P_i$ for all $i$ and $\sum_{i=1}^m |S_i| \le k$, realizing $\max_{k_1+\ldots+k_m \le k} \sum_{i=1}^m \HSS(P_i,k_i) = \sum_{i=1}^m \mu(S_i)$. We then argue analogously:
  \[ (1-\eps) \max_{k_1+\ldots+k_m \le k} \sum_{i=1}^m \HSS(P_i,k_i) = (1-\eps) \sum_{i=1}^m \mu(S_i) \le \mu\Big( \bigcup_{i=1}^m S_i \Big) \le \HSS(P,k). \qedhere \]
\end{proof}

Note that the above lemmas indeed prove that the algorithm returns a $(1\pm O(\eps))$-approximation to the value $\HSS(P,k)$. In step (2) we delete the points containing the the grid boundaries, which yields an approximation for some choice of the offset $\bl$ by Lemma~\ref{lem:removegridboundary}. As we iterate over all possible choices for $\bl$ and maximize over the resulting volume, we obtain an approximation. In step (4) we round down coordinates, which yields an approximation by Lemma~\ref{lem:rounding}. Finally, in step (6) we solve the problem $\max_{k_1+\ldots+k_m \le k} \sum_{i=1}^m \HSS(\tilde P_i,k_i)$, which yields an approximation to $\HSS(\bigcup_{i=1}^m \tilde P_i,k)$ by Lemma~\ref{lem:independentsubproblemsII}. All other steps do not change the point set or the considered problem. The final approximation factor is $1\pm O(\eps)$.

\subsection{Computing an Output Set}
The above algorithm only gives an approximation for the value $\HSS(P,k)$, but does not yield a subset $S \subseteq P$ of size $k$ realizing this value. However, by tracing the dynamic programming table we can reconstruct the values $k_1+\ldots+k_m \le k$ with $V(\bl) = \sum_{i=1}^m \HSS(\tilde P_i,k_i)$. By storing in step (5) not only the values $H[i,k']$ but also corresponding subsets $\tilde S_{i,k'} \subset \tilde P_i$, we can thus construct a subset $\tilde S = \tilde S_{1,k_1} \cup \ldots \cup \tilde S_{m,k_m}$ with $V(\bl) = \sum_{i=1}^m \mu(\tilde S_{i,k_i})$.   Lemma~\ref{lem:independentsubproblemsI} now implies that 
\[ \mu(\tilde S) \ge (1-\eps) V(\bl). \]
By storing in step (4) for each rounded point an original point, we can construct a set $S$ corresponding to the rounded points $\tilde S$ such that 
\[ \mu(S) \ge \mu(\tilde S) \ge (1-\eps) V(\bl) \ge (1-O(\eps)) \HSS(P,k), \]
and thus $S$ is a subset of $P$ of size at most $k$ yielding a $(1-O(\eps))$-approximation of the optimal volume $\HSS(P,k)$.

Note that we do not compute the exact volume $\mu(S)$ of the output set $S$. Instead, the value~$V(\bl)$ only is a $(1+O(\eps))$-approximation of $\mu(S)$. 
To explain this effect, recall that exactly computing $\mu(T)$ for any given set $T$ takes time $n^{\Theta(d)}$ (under the Exponential Time Hypothesis). As our running time is $O(n^2)$ for any constant $d,\eps$, we cannot expect to compute $\mu(S)$ exactly.

\subsection{Improved Algorithm}
\label{sec:improvementtim}

The following improvement was suggested to us by Timothy Chan. 
For constant $d$ and $\eps$ the algorithm shown above runs in time $O(n (k + \log n))$. The bottleneck for the $O(nk)$-term is step~(6): Given $H_i(k') := \HSS(\tilde P_i,k')$ for all $1 \le i \le m$, $0 \le k' \le |\tilde P_i|$, we want to compute
\[ \max_{k_1+\ldots+k_m \le k} \; \sum_{i=1}^m H_i(k_i). \]
Note that it suffices to compute an $(1+\eps)$-approximation to this value, to end up with an $(1+O(\eps))$-approximation overall.

This problem is an instance of the \emph{multiple-choice 0/1 knapsack problem}, where we are given a budget $W$ and items $j \in S$ with corresponding weights $w_j$ and profits $p_j$, as well as a partitioning $S = S_1 \cup \ldots \cup S_m$, and the task is to compute the maximum $\sum_{j \in T} p_j$ over all sets $T \subseteq S$ satisfying $\sum_{j \in T} w_j \le W$ and $|T \cap S_i| = 1$ for all $1 \le i \le m$.
In order to cast the above problem as an instance of multiple-choice 0/1 knapsack, we simply set $S_i := \{0,1,\ldots,\min\{k,|\tilde P_i|\}\}$ and define $p_j := H_i(j)$ and $w_j = j$ for all $j \in S_i$. We also set $W := k$. Note that now the constraint $\sum_{j \in T} w_j \le W$ corresponds to $k_1+\ldots+k_m \le k$ and the objective $\sum_{j \in T} p_j$ corresponds to $\sum_{i=1}^m H_i(k_i)$.

For the multiple-choice 0/1 knapsack problem there are known PTAS techniques. In particular, in his Master's thesis, Rhee~\cite[Section 4.2]{Rhee15} claims a time bound of $O(m \eps^{-2} \log (m/\eps) \max_j |S_j| + |S| \log |S|)$. In our case, we have $m \le n$ and $|S_j| = \min\{k,|\tilde P_i|\} + 1 = O(\eps^{-2} \log 1/\eps)^d$. Moreover, $|S| \le m \cdot \max_j |S_j|$. This yields a time of 
$O( n \log(n/\eps) \cdot (\eps^{-2} \log 1/\eps)^d )$.

Plugging this solution for step (6) into the algorithm from the previous sections, we obtain time
\[ O\Big(n \cdot \eps^{-d} \big(\log n + \log(n/\eps) \cdot (\eps^{-2} \log 1/\eps)^d + 2^{O(\eps^{-2} \log 1/\eps)^d}\big)\Big). \]
This can be simplified to $O\big( n \big(\log (n/\eps) \cdot \eps^{-3d} \cdot \log^d(1/\eps) + 2^{O(\eps^{-2} \log 1/\eps)^d}\big) \big)$, which is bounded by $O\big( 2^{O(\eps^{-2} \log 1/\eps)^d} \cdot n \log n \big)$.

\section{Conclusions}
\label{sec:conclude}

We considered the volume selection problem, where we are given $n$ points in $\RR_{>0}^d$ and want to select $k$ of them that maximize the volume of the union of the spanned anchored boxes. 
We show: (1) Volume selection is NP-hard in dimension $d=3$ (previously this was only known when $d$ is part of the input). (2) In 3 dimensions, we design an $n^{O(\sqrt{k})}$ algorithm (the previously best was $\Omega\big(\binom{n}{k}\big)$). (3) We design an efficient polynomial time approximation scheme for any constant dimension $d$ (previously only a $(1-1/e)$-approximation was known).

We leave open to improve our NP-hardness result to a matching lower bound under the Exponential Time Hypothesis, e.g., to show that in $d=3$ any algorithm takes time $n^{\Omega(\sqrt{k})}$ and in any constant dimension $d\ge 4$ any algorithm takes time $n^{\Omega(k)}$. Alternatively, there could be a faster algorithm, e.g., in time $n^{O(k^{1-1/d})}$. 
Finally, we leave open to figure out the optimal dependence on $n,k,d,\eps$ of a $(1-\eps)$-approximation algorithm.

Moving away from the applications, one could also study volume selection on general axis-aligned boxes in $\RR^d$, i.e., not necessarily anchored boxes. This problem {\sc General Volume Selection} is an optimization variant of Klee's measure problem and thus might be theoretically motivated. However, {\sc General Volume Selection} is probably much harder than the restriction to anchored boxes, by analogies to the problem of computing an independent set of boxes, which is not known to have a PTAS~\cite{adamaszek2013approximation}. In particular, {\sc General Volume Selection} is NP-hard already in 2 dimensions, which follows from NP-hardness of computing an independent set in a family of congruent squares in the plane~\cite{fowler1981optimal,imai1983finding}.

\paragraph*{Acknowledgements}
This work was initiated during the Fixed-Parameter Computational
Geometry Workshop at the Lorentz Center, 2016. We are grateful to the other participants
of the workshop and the Lorentz Center for their support.
We are especially grateful to G\"unter Rote for several discussions and related work.

{\raggedright
\bibliographystyle{plainurl}
\bibliography{biblio-shorter}
}

\end{document}